\newcommand{\mA}{\mathcal{A}}
\newcommand{\mS}{\mathcal{S}}
\newcommand{\mT}{\mathcal{T}}
\newcommand{\mG}{G}
\newcommand{\mP}{\mathcal{P}}
\newcommand{\mF}{\mathcal{F}}
\newcommand{\bbone}{\mathds{1}}
\newtheorem{assumption}{Assumption}
\begin{document}
\title{Synthesis of Static Test Environments for Observing Sequence-like Behaviors in Autonomous Systems}
\titlerunning{Formal Test Environment Synthesis}
% If the paper title is too long for the running head, you can set
% an abbreviated paper title here
%
\author{Apurva Badithela \and
Richard M. Murray}
\authorrunning{A. Badithela and R.M. Murray}
% First names are abbreviated in the running head.
% If there are more than two authors, 'et al.' is used.
%
\institute{California Institute of Technology, Pasadena CA 91106, USA
\\\email{\{apurva, murray\}@caltech.edu}}
\maketitle              % typeset the header of the contribution
\begin{abstract}
In this paper, we investigate formal test-case generation for high-level mission objectives, specifically reachability, of autonomous systems. We use Kripke structures to represent the high-level decision-making of the agent under test and the abstraction of the test environment. First, we define the notion of a test specification, focusing on a fragment of linear temporal logic represented by sequence temporal logic formulas. Second, we formulate the problem of test graph synthesis to find a test configuration for which the agent must satisfy the test specification to satisfy its mission objectives. We an algorithm, based on network flows, for synthesizing a test graph by restricting transitions, represented by edge deletions, on the original graph induced by the Kripke structures. The algorithm synthesizes the test graph iteratively using an integer linear program. We prove completeness for our algorithm, and we show that the edge deletions in each iteration maintain feasibility of the integer linear program in the subsequent iteration. We formalize the notion of a minimally constrained test graph in terms of maximum flow, and prove the synthesized test graph to be minimally constrained. We demonstrate our algorithm on a simple graph and on gridworlds.
\keywords{Test Synthesis \and Linear Temporal Logic \and Discrete state systems}
\end{abstract}

\section{Introduction}
The 2007 DARPA Urban Challenge ushered interest in autonomous driving in urban environments~\cite{DARPA_Urban}. Participating vehicles had to pass three small-scale operational test-courses designed to evaluate the autonomous car's ability to satisfy safety, basic and advanced navigation requirements, and basic and advanced traffic scenarios~\cite{DARPA_Urban_rules}. Exhaustive verification for such complex safety-critical systems is prohibitive, creating a need for a formal operational testing framework to certify reliability of these systems~\cite{koopman2016challenges}. 
\par
Due to robustness metrics from their quantitative semantics, signal temporal logic (STL) and metric temporal logic (MTL), are natural paradigms for reasoning over trajectories of low-level continuous dynamics~\cite{fainekos2009robustness,donze2010robust}. In many instances, the term testing is used inter-changeably with falsification~\cite{sankaranarayanan2012falsification}. Falsification is the problem of finding initial conditions and input signals that lead to violation of a temporal logic formula with the goal of finding such failures quickly and for black-box models~\cite{annpureddy2011s,plaku2013falsification,donze2010breach,chou2018using}. Furthermore, the black-box approaches in the related topics of falsification of hybrid systems~\cite{annpureddy2011s}, and simulation-based test generation~\cite{tuncali2018simulation,fainekos2012verification}, rely on stochastic optimization algorithms to minimize the robustness of temporal logic satisfaction. Since dense-time temporal logics better encapsulate the range of system behaviors at the with continuous dynamics, these techniques are successful at falsification at the low-level. However, some of the complexity can be attributed to the coupling between continuous dynamics with high-level discrete decision-making behaviors, a hierarchical approach to test-case generation could be effective. 
\par 
At higher levels of abstraction comprising of discrete states, LTL specifications are often used to capture mission objectives. Covering arrays have been used to initialize discrete parameters of the test configuration at the start of the falsification procedure in~\cite{annpureddy2011s,tuncali2018simulation,majumdar2019paracosm}. In this work, we focus on a framework for testing of high-level specifications in linear temporal logic (LTL). 
% Item 1: Setting up test environment, not finding falsifying inputs
 Here we adopt a different notion of testing -- one that is focused on observing the autonomous agent undertake a certain behavior in its mission. The DARPA Urban Challenge test courses, that mainly comprised of static obstacles and (dynamic) human-driven cars, were carefully designed to observe the agent undertaking certain behaviors~\cite{DARPA_Urban_rules}. For example, a part of the test course was designed for assessing parking behavior. The static obstacles -- barriers blocking the region in front of the parking lot and other parked cars -- were placed such that the agent had to repeatedly reverse/pull-in to incrementally adjust its heading angle before successfully parking in the designated spot. The clever placement of static obstacles in this scenario made it a challenging test for the agent, as opposed to an environment in which the agent pulls-in straight into the parking spot. Similarly, carefully designed scenarios with human-driven cars sought to observe other behaviors of the agent. In many, but not necessarily all, of these scenarios, the high-level behavior of the agent can be described as a sequence of waypoints. In the parking lot example, the sequence of waypoints can be characterized as a sequence of agent states, which can be characterized as a product of position and heading angle in the high-level abstraction. As a step towards automatically synthesizing these test scenarios, this paper asks the following question: \medskip
 \newline
\textit{Problem (Informal): Given a valid, user-defined sequence of waypoints, a reachability objective for the mission specification, find a set of possible initial conditions for the agent (if not specified by user) and determine a set of static constraints, characterized by transitions that are blocked/restricted, such that:
\begin{enumerate}[-,topsep=0pt]
    \item[i)]  the agent must visit the sequence of waypoints in order before its goal, and
    \item[ii)] the test environment is minimally constrained.
\end{enumerate}
}
\medskip
% Item 2: Focused on a small fragment of LTL specifications.
Most formal analyses of discrete state systems and temporal logic specifications are based on the model-checking paradigm of constructing a product automaton of the system abstraction and the B\"uchi automaton of the temporal logic formula~\cite{baier2008principles}. While this is a powerful approach that encapsulates any temporal logic formula, the B\"uchi automaton has potentially exponential number of states in the length of the formula. However, when restricted to a specific class of temporal logic formulas, it is possible to develop an alternative approach to analyzing the properties of a discrete-state system. In the paper, we use the concept of network flows~\cite{cormen2009introduction} and integer linear programming (ILP) to synthesize ``cuts'' or static constraints to characterize the test environment. In particular, the waypoints and goal state constitute the sources and sinks of these flows, and we seek to enforce static constraints that maximize the flow between consecutive waypoints such that the overall flow from the initial waypoint to the final goal is maximized. 
% While coverage metrics are used to measure the quality of the test suite, and are not universal across all testing frameworks. In the CPS falsification literature, some popular coverage techniques include state coverage~\cite{nahhal2007test}, input signal space coverage~\cite{adimoolam2017classification}, and output signal space coverage. Algorithms for fault coverage, in which the test strategy always results in a falsifying trace if a fault (which satisfies a user-defined fault model) exists in the system, have been developed for systems characterized as discrete finite automata~\cite{bloem2019synthesizing}. 

% Taking inspiration from motion planning literature~\cite{bhatiamulti}, we focus on coverage of atomic propositions. Specifically, the coverage requirement is the objective of the tester and is not known to the system being tested, and the system under test has a co-safe LTL specification. In~\cite{chatterjee2008complexity}, the coverage complexity (in terms of the number of coverage goals) on the game graph was shown to be PSPACE-complete. To address this, we assume that the atomic propositions are \emph{prioritized}.
The contributions of this paper are as follows:
\begin{enumerate}
    \item Formulating the static test synthesis problem, formalizing the notions of a test specification and a test graph,
    \item An algorithm and an ILP subroutine that iteratively synthesizes static constraints using fewer variables in each iteration,
    \item Proof of completeness and recursive feasibility (feasibility of the ILP constraint set is invariant despite static constraints determined in prior iterations) of the aforementioned iterative algorithm,
    \item Formalizing the notion of a minimally constrained test graph, and a proof showing that the synthesized test environment is minimally constrained.
    % \item Simulation results on a simple graph to illustrate the concept, and on parametrized gridworlds.
\end{enumerate}
The rest of this paper is organized as follows: In Section~\ref{sec:Prelims}, we define the test specification and the test graph, and formally present the static test environment synthesis problem. The baseline ILP and the iterative algorithm to solve the static test synthesis problem are detailed in Section~\ref{sec:Algorithms}, and subsequently, we prove recursive feasibility, completeness, and show that the synthesized static environment is minimally constrained. In Section~\ref{sec: Examples}, we illustrate the algorithms on parametrized gridworlds. Finally, we present future directions in Section~\ref{sec:Conclusions}.
%% Preliminaries ================================================================= %%
\section{Preliminaries and Problem Statement}
\label{sec:Prelims}
In this section, we first introduce the notion of discrete finite state systems. Afterwards, we formally describe the specification that such system should satisfy. Finally, we present basic definitions from flow networks, which we will leverage later on for test synthesis.
% \begin{definition}
% A \emph{(finite) automaton} \(\mA\) defined over a finite input alphabet \(\Sigma\) is a quadruple \(\mA =\)(\(Q, \Delta, Q_0, F\)), where \(Q\) is a finite set of states, \(\Delta: Q\times\Sigma \rightarrow 2^{Q}\) is the non-deterministic transition function, \(Q_0\) is the set of initial states, and \(F\) is the set of accepting states~\cite{rabin1959finite}.
% \end{definition}
% \begin{definition} An \emph{atomic proposition} is a logical statement that can be evaluated to either  \emph{true} or \emph{false} at every system state \(q \in Q\). 
% \end{definition}
\begin{definition}[Kripke Structure~\cite{baier2008principles}]
A \emph{Kripke structure} is a tuple \(\mT = (Q, A, \Delta, Q_0, AP, L\)), where \(Q\) is a finite set of states, \(A\) is a finite set of actions, \(\Delta: Q\rightarrow 2^{Q}\) is a non-deterministic transition function, \(Q_0\) is the set of initial states, and \(\mathcal{AP}\) is the set of atomic propositions, and \(L: Q \rightarrow 2^{\mathcal{AP}}\) is a labeling function. A directed graph \(G = (V, E)\) can be induced from \(\mT\) in which the vertices represent states \(Q\) and the edges represent the transitions, and the labeling function assigns propositions that are true at each vertex. For a proposition \(p\), and vertex \(v\), \(v \vdash p\) means that \(p\) is satisfied at \(v\). A \emph{run} \(\sigma = q_0 q_1 \ldots\) on the graph is an infinite sequence of its nodes where \(q_i \in Q\) represents the state at index \(i\).
\end{definition}
% ToDo: Write syntax and semantics of LTL
Linear Temporal Logic (LTL) is a formal specification language, originally developed for software systems, that has also been used for characterizing high-level specifications of cyber-physical systems~\cite{emerson1990temporal,manna2012temporal,alur2015principles,kress2009temporal}. For the syntax and semantics of LTL, refer to Appendix or ~\cite{baier2008principles}.
Finally, we present a few definitions on network flows that will become relevant for the purpose of synthesizing the test environment. 
\begin{definition}[Flows~\cite{cormen2009introduction}]
A flow network is a directed graph \(G=(V,E)\) with a non-negative capacity function \(c : E\rightarrow \mathbb{R}\) such that each edge \((u,v) \in E\) is assigned a non-negative capacity \(c(u,v)\geq 0\). The flow network also comprises of a source \(s \in V\) and the sink \(t \in V\). A flow is a function \(f:V\times V \rightarrow \mathbb{R}\) that satisfies the capacity constraint and flow conservation properties (see ~\cite{cormen2009introduction}). The value of the flow is given as follows,
\begin{equation*}
    f_G(s,t) = \sum_{v\in V}f(s,v) - \sum_{v\in V}f(v,s).
\end{equation*}
See Appendix or~\cite{cormen2009introduction} for a definition of residual flow networks.
\end{definition}
\begin{definition}[Augmenting Paths~\cite{cormen2009introduction}]
An augmenting path \(P\) is a simple path, that is, each edge is traversed once, from source \(s\) to sink \(t\) on the residual network \(G_f\). Maximizing the flow value \(f_G(s,t)\) leaves the residual network \(G_f\) with no augmenting paths. We say that a set of augmenting paths \(S = \{P_1,\ldots, P_{f_G(s,t)}\}\) represents an instance of the maximum flow from \(s\) to \(t\) in \(G\) if the flow network \(G_f\) is constructed by identifying the augmenting paths in \(S\). The set \(S\) is said to be the set of shortest augmenting paths if the residual network is recursively constructed by identifying the smallest length augmenting path from \(s\) to \(t\). 
\end{definition}
In our formulation, an edge represents a transition the system can make. For this reason, we assign unit capacities for every edge in the flow network. In unit-capacity networks, given a set \(S\) of augmenting paths characterizing maximum flow, any two augmenting paths \(P_1, P_2 \in S\) are always edge-disjoint with respect to each other. 
\begin{definition}[Minimum-cut edge~\cite{cormen2009introduction}]
A \emph{minimum-cut edge}, given a source \(s\) and sink \(t\), is an edge that belongs to some minimum cut. A minimum-cut edge can be identified by removing the edge and checking if the max-flow value decreases. For definition of minimum-cut, refer to Appendix or~\cite{cormen2009introduction}.
\end{definition}
\begin{definition}[\(ij\)-cycle]
Given a graph \(G\) and propositions characterizing the sequence specification, \(p_1, \ldots, p_n\), let \(P_i\) represent a simple path from \(v_i\) to \(v_{i+1}\) for all \(i = 1,\ldots, n\). The simple path from \(v_1\) to \(v_{n+1}\) characterized by (\(P_1, \ldots, P_n\)) has an \emph{\(ij\)-cycle} if there is an edge \((u,w)\in P_i\) and \((w,v)\in P_j\) for some \(i+1 \leq j \leq n+1\), except for the case in which both \(w = v_{i+1}\) and \(j = i+1\). 
\end{definition}
\begin{definition}[Sequence flow path]
Given a graph \(G\) and propositions characterizing the sequence specification, \(p_1, \ldots, p_n\), let \(P_i\) represent a simple path from \(v_i\) to \(v_{i+1}\) for all \(i = 1,\ldots, n\). The simple path from \(v_1\) to \(v_{n+1}\) characterized by \(P_f=\)(\(P_1, \ldots, P_n\)) is a sequence flow path if the subgraph of \(G\) induced by \(P_f\) satisfies equation~\eqref{eq:assumption2}. The notation \(P_f(i)\) is used to denote the \(i\)-th path \(P_i\) in \(P_f\). A sequence flow path has no \(ij\)-cycles.
\end{definition}
\begin{definition}[Sequence flow]
A sequence flow \(\mS_f = \{P_{f,1},\ldots, P_{f,\tilde{f}}\}\) for propositions characterizing the sequence specification is a finite set of sequence flow paths, \(P_{f,1},\ldots, P_{f,\tilde{f}}\), such that any two sequence flow paths are edge-disjoint with respect to each other, and any sequence flow path, \((P_{f,k_1}(1)\), \(P_{f,k_2}(2)\), \(\ldots\), \(P_{f,k_n}(n))\), where \(k_1,\ldots,k_n\leq \tilde{f}\), have no \(ij\)-cycles. 
\end{definition}
% \begin{definition}
% In a static test environment in which the NTS \(\mathcal{T}_A\) modeling the agent induces the graph \(G = (Q, E\)), the \emph{test configuration} \(T\) is the tuple (\(Q_0\), \(C\)), where \(Q_0 \subseteq Q\) is the set of initial conditions for the test run, and \(C \subseteq E\) represents system transitions that need to be removed or constrained prior to the start of the test run. 
% \end{definition}
% \begin{definition}
% A \emph{test run} is a \emph{run} \(\sigma\) on the graph \(G = (V, E\backslash C)\) such that \(\sigma_0 = v_1\), and (\(C\)) is a static test configuration. 
% \end{definition}
\subsection{Test Specification}
In this paper, we consider reachability specifications as mission objectives for the agent under test. For the test itself, we wish to observe a sequence-like behavior of the agent in its attempt to satisfy its mission objectives. Formally, this test behavior can be described by the following temporal logic formula,
\begin{equation}
    \varphi_{test} := \lozenge(p_1 \wedge \lozenge(p_2 \wedge \lozenge(\cdots \wedge \lozenge p_n))) \bigwedge_{i=1}^{n-1} (\neg p_{i+1} \mathsf{U}\, p_{i})\, ,
    \label{eq:test_fragment}
\end{equation}
where \(p_1, \ldots, p_n\) are propositional formulas. This is a sequence-like formula since the agent has to eventually visit every \(v_i\), but it cannot visit \(v_{i+1}\) before visiting \(v_i\), where \(v_i \vdash p_i\), for all \(i = 1, \ldots, n\).
The agent under test does not have access to the test specification \(\varphi_{test}\). Since LTL formulae cannot be evaluated on finite test runs, the length of the test run depends on the time the agent takes to satisfy its mission objective. 
\textit{Example: } Consider a gridworld in Figure~\ref{fig:illustration} on which the agent can transition between states (up, down, left, right) with the mission specification of reaching some goal state (formalized as \(\varphi_g = \lozenge g\)). Of the many possible paths the agent can take to meet its objective, we're interested in observing it navigate to the goal while restricted to a class of paths described by the test specification \(\varphi_{test} = \lozenge(p_1 \wedge \lozenge p_2)\). How would we constrain actions of the agent in certain states, such that it navigates through the sequence of waypoints before reaching the goal? Furthermore, is it possible to synthesize these constraints such that the sequence flow value from \(p_1\) to \(g\) is maximized?
\begin{figure}[h]
    \centering
    \begin{tabular}{ll}
    \includegraphics[width=.45\linewidth]{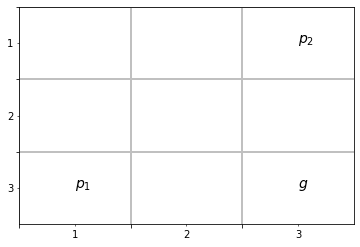} \,\,
    \includegraphics[width=.45\linewidth]{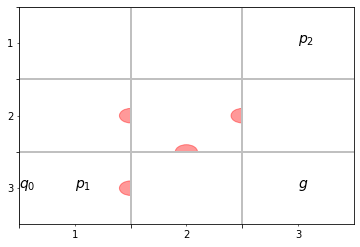}
    \end{tabular}
    \caption[]{{\small{Left: Unconstrained gridworld labeled by propositional formulas. Right: A test environment synthesized by our algorithm where the transitions \((2,1)\rightarrow (2,2)\), \((3,1)\rightarrow (3,2)\), \((2,2)\rightarrow (3,2)\), and \((2,2)\rightarrow (2,3)\) blocked. Red semi-circle patches illustrate one-way constraints, that is, transition from state \(u\) to state \(v\) is restricted, but \(v\) to \(u\) is allowed, if arch of the semi-circle is in the grid corresponding to \(u\) along the transition from \(u\) to \(v\).}}}
    \label{fig:illustration}
\end{figure}
\subsection{Problem Statement}
Now, we formalize the test environment synthesis problem. We limit our focus to \emph{static} test environments, by which we mean that the test environment does not react to the actions of the agent during the test, leaving the reactive test synthesis problem for future work.
\begin{definition}[Test Graph]\label{def:testGraph}
Given a labeled directed graph \(G = (V,E)\), a mission/agent specification \(\varphi_A = \lozenge p_{n+1}\), a test specification \(\varphi_{test}\) (equation~\ref{eq:test_fragment}),
% and corresponding test constraints (\(C \subseteq E, Q_0 \subseteq Q\)), 
a \emph{test graph} \(G' = (V, E\backslash C)\) is the directed graph obtained by removing set of edges, \(C\), from the original graph \(G\). On \(G'\), a run \(\sigma\) starting from state \(v_1\) will satisfy the specification,
\begin{equation}
    \sigma \models \lozenge p_{n+1} \iff \sigma \models \lozenge(p_1 \wedge \lozenge(p_2 \wedge \lozenge(\cdots \wedge \lozenge p_{n+1}))) \bigwedge_{i=1}^{n} (\neg p_{i+1} \mathsf{U}\, p_{i})\,.
    \label{eq:test_spec}
\end{equation}
\end{definition}
\begin{definition}[Minimally Constrained Test Graph]
A test graph \(G'\) is \emph{minimally constrained} if the sequence flow value from \(v_1 \models p_1\) to \(v_{n+1} \models p_{n+1}\) on \(G'\) are maximized.
\end{definition} 
\begin{problem}
Given an agent specification \(\psi_A = \lozenge v_{n+1}\), a labeled directed graph \(G = (V,E)\) induced by the non-deterministic transition model \(\mathcal{T}\) of the agent, a test specification \(\varphi_{test} = \lozenge (p_1 \wedge \lozenge (p_2 \wedge \lozenge(\cdots \wedge \lozenge p_n)))\), % compute initial conditions \(\mathcal{Q}_0\) and 
static constraints \(C\subseteq E\) such that on the resulting graph \(G' = (V, E \backslash C\)) is a minimally-constrained test graph.
\label{Pr:Problem1}
\end{problem}
Standard algorithms, such as Edmonds-Karp~\cite{cormen2009introduction}, can find the minimum-cut of a single source-sink flow problem in a graph \(G=(V,E)\) in \(O(|V||E|^2)\) time. However, here we aim to find a cut that maximizes the flow from a waypoint \(p_i\) to its consecutive waypoint \(p_{i+1}\), while eliminating any flow to waypoints \(p_j\) (\(j > i+1\)) for all \(i = 1,\cdots,n\). In other words, some flows need to be cut while other flows should be maximized. The problem of constructing a minimally constrained test graph for observing a sequence-like specification can be cast as the following optimization,
\begin{equation}
\begin{aligned}
\max_{\substack{C \subset E \\ G' = (V,E\backslash C)}} \quad & f_{G'}\\
\textrm{s.t.} ~~\quad\quad & f_{G'} \leq f_{G'}(v_i, v_{i+1})\, && \forall i = 1,\cdots,n,\\
\quad & f_{G'}(v_i, v_j) = 0\, && \forall i = 1,\cdots, j-2, \forall j = 3, \cdots, n,\\
\end{aligned}
\label{eq:ILP_Problem}
\end{equation}
where the variables are \(C\subset E\), the set of edges to be constrained, scalar \(f_{G'}\), scalars \(f_{G'}(v_i, v_j)\) that represent the total flow from source \(v_i\) to sink \(v_j\) on a unit-capacity graph \(G'\), and the problem data is the original graph \(G = (V,E)\) and the nodes \(v_1, \ldots, v_n\). Solving this optimization directly will require constructing an ILP, for which constructing the constraint set is not straightforward. Furthermore, it would require solving an ILP with \(|E|\) number of integer variables. As a result, we propose Algorithm~\ref{alg:StaticConstraints} to iteratively find  the optimal \(f_G'\).
\section{Algorithm for Synthesizing Test Environment}
\label{sec:Algorithms}
Let \(G = (V, E)\) be a directed graph, with unit capacity on every edge, induced by the Kripke structure \(\mT\) of the system under test. Assuming that the test environment has complete freedom to ``block" any transition in the graph \(G\), Algorithm~\ref{alg:StaticConstraints} returns a set of edges, \(C\subset E\), of the graph \(G\) that must be removed before the test run. Let \(d_{G}(v_1, v_2)\) denote the length of the shortest path from vertex \(v_1\) to vertex \(v_2\) on graph \(G\). First, we make following assumptions on \(G\),
\begin{assumption}
 Likewise, for each \(i \in \{1,\ldots, n+1\}\), let \(v_i\) denote the vertex \(v \in V\) s.t \(v\vdash p_i\). Assume \(|v_i| = 1\), for all \(i = \{1, \ldots, n+1\}\). 
\label{assump:n_goal_nodes}
\end{assumption}
Informally, assumption~\ref{assump:n_goal_nodes} states that every propositional formula, \(p_1, \ldots, p_{n+1}\), has a single vertex in \(G\) associated with it.
\begin{assumption}
 There exists a set of edges \(C \subseteq E\) such that the modified graph obtained by removing these edges, \(\mG' = (V, E\backslash C)\), is such that
\label{assump:exist_config}
\end{assumption}
 \begin{equation}
     d_{\mG'}(v_1, v_{n+1}) > \cdots > d_{\mG'}(v_n, v_{n+1}) > d_{\mG'}(v_{n+1}, v_{n+1}) = 0.
     \label{eq:assumption2}
 \end{equation}
The above Assumption~\ref{assump:exist_config} is equivalent to the statement that by removing some edges (or restricting certain transitions) from the original graph \(G\), there exists some set of initial conditions \(Q_0\) for which the only path(s) to the goal \(g\) is through the behavior \(\varphi_{test}\). This assumption is imperative since there might be instances for which it is impossible to construct a test graph. For example, in the following simple labeled graph (Figure~\ref{fig:invalid_config}), it is impossible to construct a test graph for the test specification \(\varphi_{test} = \lozenge(p_1 \wedge \lozenge p_2)\). Once the system is in state \(v_1\), it can directly proceed to the goal state \(v_g\) without visiting \(v_2\). For instances such as this one, a reactive test environment is necessary.
\begin{figure}[H]
    \centering
    \begin{tikzpicture}
          \Vertex[label=$v_2$]{A}
          \Text[x=0, y=0.5]{$p_2$}
          \Vertex[x=1.5, y=0, label=$v_1$]{B}
          \Text[x=1.5, y=0.5]{$p_1$}
          \Vertex[x=3, y=0, label=$v_g$]{C}
          \Text[x=3, y=0.5]{$g$}
          % edges:
          \Edge[bend=45, Direct=true](A)(B)
          \Edge[bend=45, Direct=true](B)(A)
          \Edge[Direct=true](B)(C)
    \end{tikzpicture}
    \caption[]{\small{An invalid configuration of propositional formulas for test specification \(\varphi_{test} = \lozenge (p_1 \wedge \lozenge p_2)\) }}
    \label{fig:invalid_config}
\end{figure}
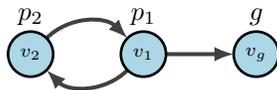
\subsection{Finding Combinations of Augmenting Paths}
In a unit-capacity flow network \(G = (V,E)\) with source \(s\) and sink \(t\) with maximum flow value \(f\), let \(S = \{P_1, \ldots, P_f\}\) represent the set of augmenting paths, or edge-disjoint paths, that characterize an instance of the maximum flow in the network since there can be multiple flow functions that can achieve maximum flow on the graph from source \(s\) to sink \(t\). For maximum flow from source \(s\) to sink \(t\) on \(G\), let maximum flow value be denoted \(f_G(s, t)\). On a minimally constrained test graph \(G'\), the maximum sequence flow value will be bounded as follows, 
\begin{equation}
    f_{G'}(v_1, v_{n+1}) \leq \min_{\substack{i=1,\ldots,n}} f_{G'}(v_i, v_{i+1}).
\end{equation}
Let \(F_i = \{S_{1,i}, \ldots, S_{k_i,i}\}\) denote the set of all sets of augmenting paths that characterize the maximum flow. Note that \(F_i\) is finite since the number of edges are finite, but can be combinatorial in number of simple paths from \(v_i\) to \(v_j\). Then, \(|F_i|\) denotes the total number of maximum flow realizations from \(v_i\) to \(v_{i+1}\). 
Consequently, the total number of augmenting path combinations from \(v_1\) to \(v_{n+1}\) will be \(\Pi_{i=1}^{n} |F_i|\). However, not every augmenting path combination might lead to a valid test graph since there could exist a combination of augmenting paths that violates the requirement in equation~\eqref{eq:test_spec} by constructing an \(ij\)-cycle. Consider the simple example of the \(3\times 3\) grid in Figure~\ref{fig:cycle_demo}. The combination of sequence flows \((S_{11}, S_{22}, S_{13})\) will give us \(f_{G'}(v_1, v_{n+1}) = 1\), but the combination of \((S_{11}, S_{12}, S_{13})\) forms an \(ij\)-cycle.
\begin{figure}[!htbp]
    \centering
    \begin{tabular}{ll}
    \includegraphics[width=.47\linewidth]{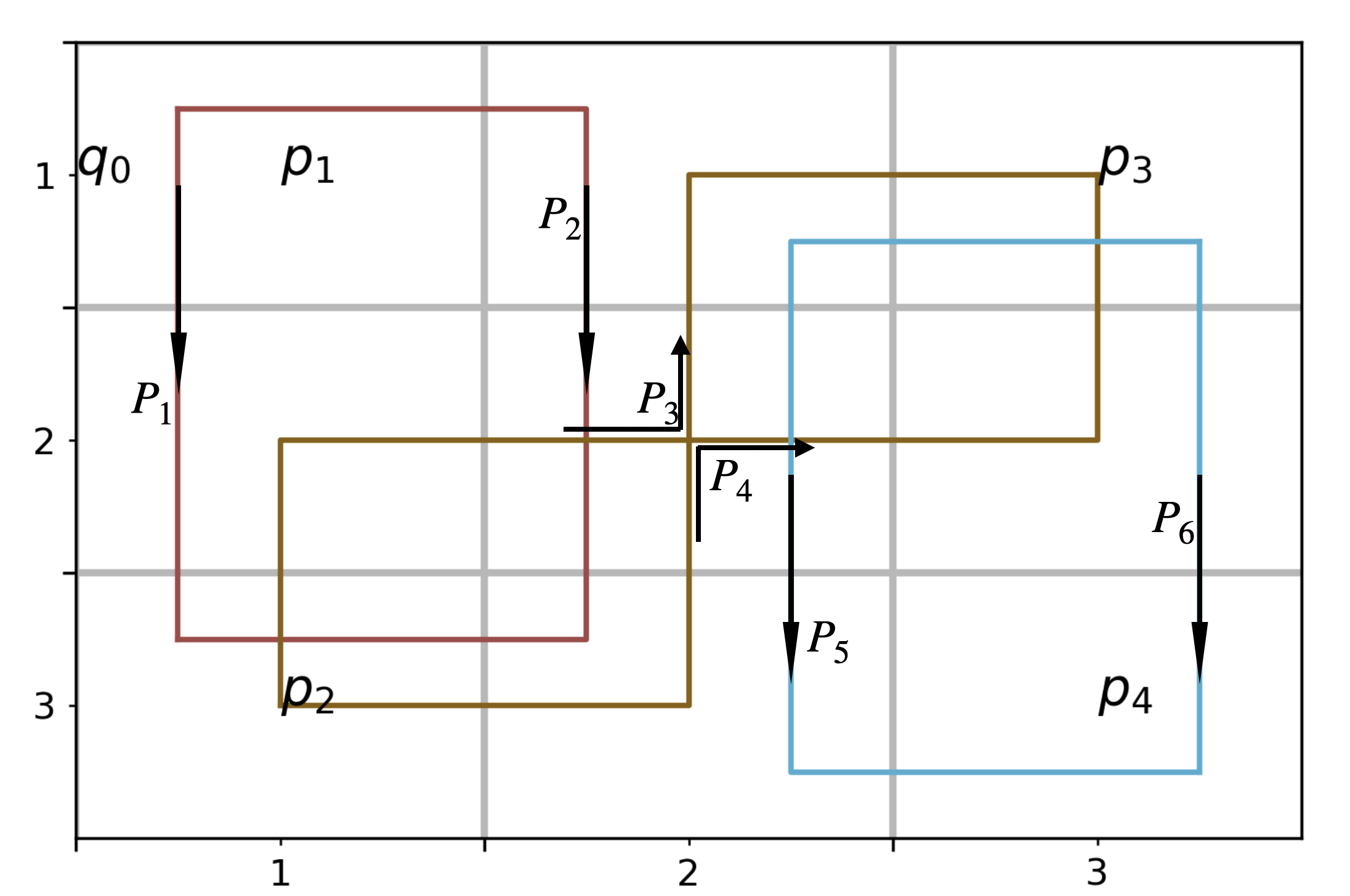} \,\,
    \includegraphics[width=.47\linewidth]{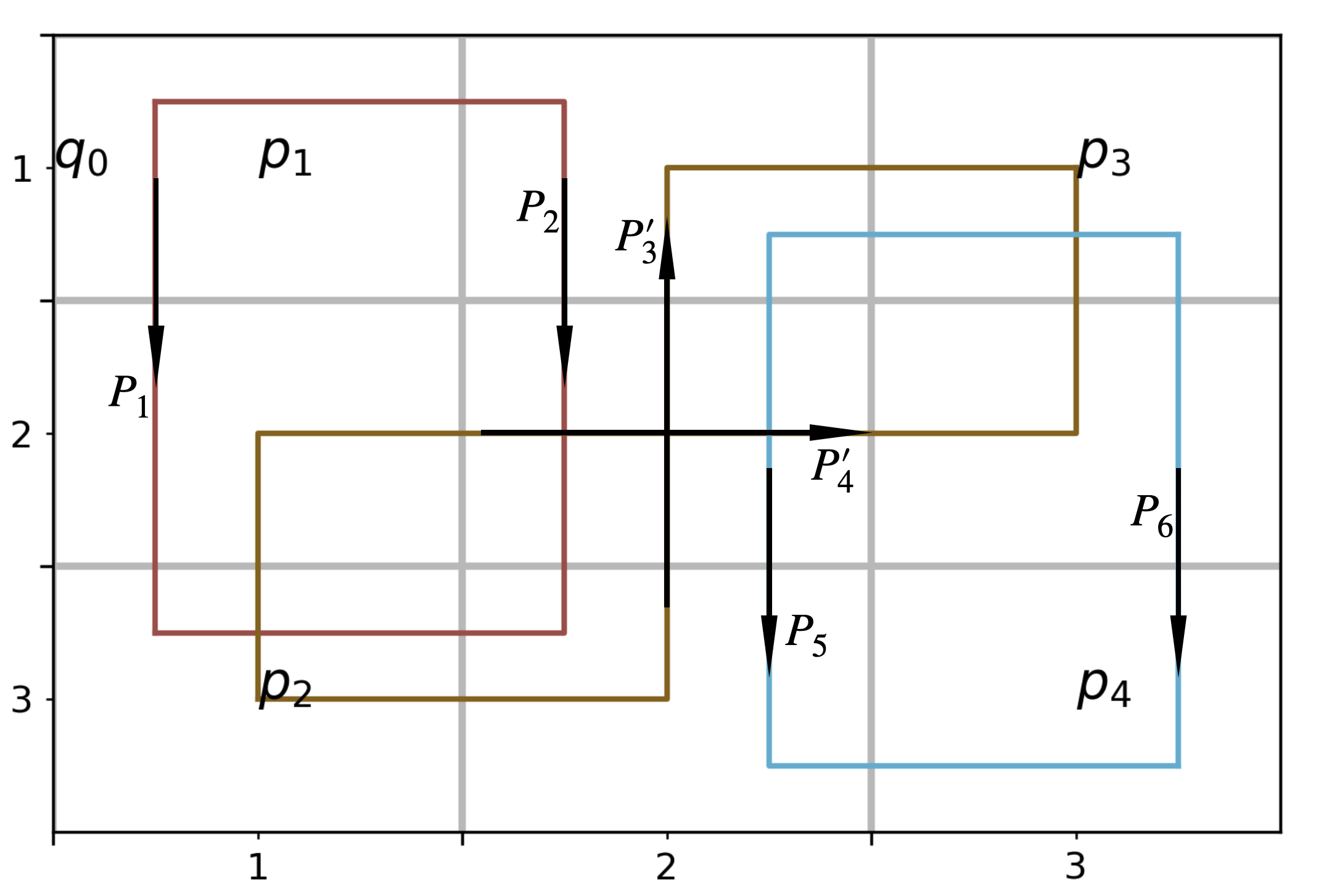}
    \end{tabular}
    \caption[]{\small{In this \(3\times 3\) grid, the left and right figures illustrate two different augmenting path combinations. In both figures, the augmenting paths characterizing the flow from \(p_1\) to \(p_2\) and \(p_3\) to \(p_4\) are the same: \(S_{11} = \{P_1, P_2\}\) characterizes the maximum flow from \(p_1\) to \(p_2\), and \(S_{13} = \{P_5, P_6\}\) characterizes the maximum flow from \(p_3\) to \(p_4\). On the left, \(S_{12} = \{P_3, P_4\}\), and on the right, \(S_{22} = \{P'_3, P'_4\}\). It is possible to form a sequence flow on the right with \((P_1, P'_3, P_6)\), but not on the left.}}
    \label{fig:cycle_demo}
\end{figure}
To avoid this issue, the algorithm searches through all combinations of sequence flows before constructing the input to the ILP~\eqref{eq:ILP_Problem}. Since this is an expensive computation, a further assumption on the input graph and set of propositions can ease this bottleneck. This assumption is as follows, 
\begin{assumption}
Let \(F^{min}_i = \{S^{min}_i = \{P_1,\ldots, P_{f_G(v_i, v_{i+1})}\}\}\), represent the set of sets of shortest augmenting paths that characterizes the flow from \(v_i\) to \(v_{i+1}\) on \(G\). Then, there exists a combination \((S^{min}_1, \ldots, S^{min}_n)\) on which a maximum sequence flow can be characterized.
\label{assump:SAPs}
\end{assumption}
In other words, Assumption 3 allows us to reason over combinations of shortest augmenting path flows, which is combinatorial in all shortest paths, instead of combinations of all augmenting flows, which is combinatorial in all simple paths. All shortest paths a subset of all simple paths between two nodes. 
\subsection{Iterative Synthesis of Constraints}
Now we present the ILP used to iteratively compute cuts in Algorithm~\ref{alg:StaticConstraints},
\begin{equation}
\begin{aligned}
 \max_{\substack{x \in \mathbb{B}^n,\, f\in \mathbb{B}^l \\ b\in \mathbb{B}^m}} \quad & \bbone^Tf\\
\textrm{s.t.} ~~\quad & A_{cut}x \geq \bbone\, && \\%\forall j = 1,\cdots,i-1,\\
\quad & A_{keep}x \leq D_{keep} b\, && \\ %\forall j = 1,\cdots,i,\\
  % \quad & b_{j,j+1} \leq A_{j,j+1}x + (1-\epsilon)\bbone\, & \forall j = 1,\cdots,i,\\
  \quad & b \leq A_{keep}x \, && \\%\forall j = 1,\cdots,i,\\
  \quad & D_ff \leq A_f(\bbone - b)\,, &\\
  \quad & f \geq A_f(\bbone -b) - D_f\bbone + \bbone, &\\
\end{aligned}
  \label{eq:ILP_Problem}
\end{equation}
where (\(x,b,f\)) are the optimization variables, and \(A_{cut} \in \mathbb{B}^{k\times n}\), \(A_{keep}\in \mathbb{B}^{m\times n}\), \(D_{keep}\in \mathbb{B}^{m\times m}\), \(D_f\in \mathbb{B}^{l\times l}\), \(A_f\in \mathbb{B}^{l\times m}\) are problem data described in more detail in Appendix~\ref{sec:appendix_constraints}.

% =================== Algorithm for restricting transitions =============
\begin{algorithm}[h]
\caption{Restrict Transitions}
\begin{algorithmic}
\Require \(\varphi_{test}\), \(\varphi_a\), \(G = (V, E, L)\).
\Ensure \(C\subseteq E\).
\State \(p \leftarrow \{p_1, \ldots, p_n, p_{n+1}\}\), \(V_p \leftarrow \{v_1, \ldots, v_n, v_{n+1}\}\)  \Comment \(v_{i} \vdash p_i\)
\State \(P_{cut}\leftarrow \)Find-Cut-Paths(\(G, p\))
\State \(C = \{\}\)
\If{Assumption~\ref{assump:SAPs}}
\State \(flg \leftarrow 1\)
\EndIf
\While{\(P_{cut} \neq \emptyset\)} \Comment Repeat until all cuts are found
\State \(E \leftarrow \text{Edges in } P_{cut}\) %\Comment Candidate edges for selecting constraints
\State \(\mA, \mP_{keep}, |\mA|, \tilde{f} \leftarrow\) Sequence-Flows\((G, p, flg=0)\) \Comment Combinations of sequence flows
\ForAll{\(j = 0,\ldots,|\mF|\)}
\State \(\mA_f \leftarrow \mA(j)\) \Comment Selecting a combination \((S_1,\ldots, S_n)\) 
\State \(P_{keep} \leftarrow \mP_{keep}(j)\) \Comment Augmenting paths for each \(v_i\) to \(v_{i+1}\)
\State \(MC_{keep} \leftarrow\) Min-Cut-Edges\((G, p, P_{keep})\) 
\State \(D_{keep}\leftarrow diag(A_{keep}\bbone)\)
% Inner loop:
\ForAll{\(A_f \in \mA_f\)}
\State \(D_f \leftarrow diag(A_f\bbone)\)
\State \(A_{cut}, A_{keep}, D_{keep}\leftarrow\) ILP-params\((P_{cut}, P_{keep}, MC_{keep})\)
\State \(x^*, f^*, b^*\leftarrow \) ILP\((A_{cut}, A_{keep}, D_{keep}, A_f, D_f)\)\Comment Call to ILP~\eqref{eq:ILP_Problem}
\If{\(\bbone^T f^* = \tilde{f}\)}
\State \(C_{new} \leftarrow \{e_i| x^*_i = 1\}\)
\State \(C \leftarrow C \cup C_{new}\)
\State \textbf{break} \Comment Breaking out of both for loops
\EndIf
\EndFor
\EndFor
\State \(G \leftarrow G\backslash C_{new}\)
\State \(P_{cut}\leftarrow \) Find-Cut-Paths(\(G, p\))
\EndWhile
% \If{\(find_{Q_0}\)}
% \State \(Q_0 \leftarrow Pre_{G_{new}}(v_1)\) \Comment Predecessors of \(p_1\) 
% \State Cut paths from \(Q_0\) to nodes constituting \(APaths_{G_{new}}(v_1,v_2)\)
\end{algorithmic}
\label{alg:StaticConstraints}
\end{algorithm}
% =============== Modified algorithm ============================== %
% To synthesize a test graph~\eqref{def:testGraph} for the sequence specification~\eqref{eq:test_spec}, all direct paths from \(v_i\) to \(v_j\), for all \(1\leq i<j-1\leq n\), must be constrained so that the only paths remaining are those that satisfy equation~\eqref{eq:assumption2}. Instead of reasoning over all edges of the graph at once to determine the set of static constraints, Algorithm~\ref{alg:StaticConstraints} synthesizes them iteratively. The algorithm relies on five key subroutines: i) Find-Cut-Paths, ii) Sequence-Flows, iii) Min-Cut-Edges, iv) ILP-params, and v) ILP. 
\par 
The subroutine Find-Cut-Paths, described in the algorithm takes as input a graph \(G\) and list of propositions, \(p\), and uses max-flow algorithms to find a set of augmenting paths for every source-sink pair (\(v_i, v_j\)) on \(G_{ij} = G\backslash (V_p\backslash\{v_j, v_i\})\) such that \(i+1\leq j \leq n+1\). All of these augmenting paths are collectively returned as the output \(P_{cut}\), and the edges constituting these cuts are denoted by \(E_{cut}\). Note that \(P_{cut}\) does not return all simple paths from \(v_i\) to \(v_{j>i+1}\), but just a set of edge-disjoint paths. As a result, we need to iteratively restrict transitions until \(P_{cut}\) is empty. A summary of the Minimum-Cut-Edges and the Sequence-Flow subroutine is given in the Appendix~\ref{sec:appendix}.
Finally, the parameters to ILP~\eqref{eq:ILP_Problem} are constructed as follows:
\medskip
\label{sec:ILP_description}\\
\textit{1. Parameters:} The parameters used to construct the problem data for the ILP~\eqref{eq:ILP_Problem} are the set of paths that need to be constrained, \(P_{cut}\), the set of paths whose combination constitutes sequence flow and should not be constrained, \(P_{keep}\), and the set of minimum-cut edges, \(MC_{keep}\), on the paths constituting \(P_{keep}\). The set \(\mP_{keep} = \{(S_{m_i,1},\ldots,S_{m_n,n})|S_{m_i, i} \in F_i, \, 0\leq m_i \leq k_i\}\) is a set of all augmenting path combinations. For a given combination of sets of augmenting paths, \(P_{keep}=(S_{m_1,1},\ldots,S_{m_n,n})\), with the cardinality of \(S_{m_i,i}\) being denoted as follows, \(n_i := |S_{m_i,i}|\), and \(m:= \Sigma_{i=1}^n\). Suppose a combination of augmenting paths, \(S_f = \{P = (P_1,\ldots, P_n)| P_i \in S_{m_i,i}\}\), represents a sequence flow, then a matrix \(A_f \in \mathbb{B}^{|S_f|\times m}\) can be constructed to represent the sequence flow \(S_f\). This construction is outlined in the descriptions of Constraints of the ILP. An instance of \(P_{keep}\) can have several sequence flows, \(S_f\), and correspondingly, several matrices, \(A_f\), all of which are collectively denoted by \(\mA_f\). The set of all such \(\mA_f\) is denoted by \(\mA\), which has cardinality \(|\mA| = |\mP_{keep}|\), since each \(\mA_f\) corresponds to an instance of \(P_{keep}\). The maximum sequence flow value is given by \(\tilde{f}\).\medskip
\\
\textit{2. Variables:}
% ------------------------ old description ----------------------------- %
% The following ILP~(\eqref{eq:ILP_Problem}) seeks to constrain edges on augmenting paths from \(q\models p_j\) to \(q\models p_{i+1}\) (\(\forall j < i\)), that do not pass through \(p_i\). Let \(e = \textbf{vec}(\bigcup\limits_{j=1}^{i-1} \mE(\mP_{j,i+1}))\) be a vector representing these edges, and define \(n = \text{len}(e)\). Let \(n_{j,i+1} = |\mP_{j,i+1}|\) be the total number of augmented paths from \(q\models p_j\) to \(q\models p_{i+1}\) on \(\mG_i\). Let \(N = \sum_{j=1}^{i}n_{j,j+1}\) be the total number of augmenting paths from \(p_j\) to \(p_{i+1}\) (\(\forall j < i\)), that do not pass through \(p_i\). 
% ----------------------------------------------------------------------- %
% Need to shorten this:
The variable \(x\in \mathbb{B}^n\), where \(n=|E_{cut}|\), is the Boolean vector corresponding to edges \(E_{cut}\) such that for some \(k\leq n\), if \(x_k = 1\), then the corresponding edge is constrained, and \(x_k = 0\) means that it is left in the graph for future iterations. 
Given \(P_{keep} = (S_{m_1,1},\ldots,S_{m_n,n}) \in \mP_{keep}\), a combination of set of augmenting paths, the variable \(b \in \mathbb{B}^m\) keeps track of whether an augmenting path in some \(S_{m_i,i}\) (\(1\leq i \leq n\)) is constrained or not. For some \(k\leq m\), if \(b_k = 1\), then the corresponding augmenting path in some \(S_{m_i,i}\) has minimum-cut edge(s) constrained by the ILP, and \(b_k = 0\) if none of the minimum-cut edges of that augmented path have been constrained. The variable, \(f \in \mathbb{B}^l\), represents the sequence flow vector for a given sequence flow, \(S_f\), such that \(l = |S_f|\) is the number of edge-disjoint paths constituting the sequence flow.\medskip \\
\textit{3. Constraints: } The constraints of ILP~\eqref{eq:ILP_Problem} are outlined here but futher details are given in Appendix~\ref{sec:appendix_constraints}.
The first constraint of the ILP, \(A_{cut}x\geq \bbone\), enforces the requirement that each path in \(P\in  P_{cut}\) is constrained. 
% Each row of \(A_{cut}\) corresponds to a path \(P \in P_{cut}\). The \(q\)-th row of \(A_{cut}\) is constructed as follows,
%     \begin{equation}
%         (A_{cut})_{q,r} = \begin{cases}
%       1 & \text{if}\ E_{cut}(r) \in P = P_{cut}(q) \\
%       0 & \text{otherwise.}
%     \end{cases}
%         \label{eq:def_cut_small_pis}
%     \end{equation}
In the second and third constraints, \(A_{keep}x \leq D_{keep}b\) and \(b \leq A_{keep}x\), is used to determine the variable \(b\) from the variable \(x\). 
These two constraints ensure that for some \(q \leq n\), \(b_q = 1\) iff at least one minimum-cut edge on the path corresponding to the \(q\)-th row of \(A_{keep}\) is constrained, and \(b_q = 0\) iff none of the minimum-cut edges on the path corresponding to the \(q\)-th row of \(A_{keep}\) are constrained.
\\
The fourth and fifth constraints, \(D_ff \leq A_f(\bbone - b)\) and \(f \geq A_f(\bbone -b) - D_f\bbone + \bbone\), determine the flow value for a given set of sequence flow paths, \(S_f\). 
% Suppose the \(q\)-th row of the matrix \(A_f \in \mathbb{B}^{l\times m}\) corresponds to some sequence flow path \(P = (P_1, \ldots, P_n)\in S_f\). Let \(R = (r_1, \ldots, r_n)\) denote the indices of the paths \(P_1,\ldots, P_n\) according to the ordering of the paths constituting all \(S_{m_i,i}\) that is consistent with the construction of \(A_{keep}\) and \(D_{keep}\). Then, the \(q\)-th row of \(A_f\) is defined as follows,
% \begin{equation}
%         (A_f)_{q,r} := \begin{cases}
%       1, & \text{if}\ r = r_i \text{ for some } 1\leq i \leq n.\\
%       0, & \text{otherwise.}
%     \end{cases}
%         \label{eq:Af}
%     \end{equation}
% The \(q\)-th diagonal entry of matrix \(D_f \in \mathbb{B}^{l\times l}\) stores the total number of ones in the \(q\)-th row of \(A_f\), which is always \(n\).
%     \begin{equation}
%         D_f := diag(A_f\bbone).
%         \label{eq:Df}
%     \end{equation}
% \\
The fourth constraint ensures that if any of the constituent paths, \(P_1, \ldots, P_n\), in the \(q\)-th sequence flow path \(P = (P_1, \ldots, P_n)\in S_f\) (for \(1\leq q\leq l\)), is constrained, then the flow value, \(f_q = 0\). The last constraint ensures that if none of the constituent paths, \(P_1, \ldots, P_n\), in the \(q\)-th sequence flow path \(P = (P_1, \ldots, P_n)\in S_f\) (for \(1\leq q\leq l\)), are constrained, then the flow value, \(f_q = 1\). \medskip\\
\textit{4. Cost Function: }The cost function computes the maximum sequence flow value.\medskip\\
% ============================= Proofs ======================================= %
Algorithm~\ref{alg:StaticConstraints} does not proceed to the next iteration of \(P_{cut}\) until it finds the set of static constraints that return the maximum possible sequence flow value, \(\tilde{f}\). To guarantee completeness of Algorithm~\ref{alg:StaticConstraints}, we need to prove that the cuts synthesized in prior iterations do not preclude feasibility of further iterations with regards to assumption~\ref{assump:exist_config}. See Appendix~\ref{sec:appendix_complexity} for complexity of the subroutines in Algorithm~\ref{alg:StaticConstraints}.
\begin{lemma}
In a graph \(G = (V,E)\), let \(\mP\) represent a maximal set of sequence flow paths from \(v_1\) to \(v_n\). Let \(\mP_{cut}\) be the set of paths that need to constrained, with the edges constituting the paths in \(\mP_{cut}\) denoted by \(E_{cut}\subset E\). Then, the set of constraint edges \(C\subseteq E_{cut}\) can be found such that \(C\) does not constrain any path in \(\mP\).
\label{lemma:min_cut_edges}
\end{lemma}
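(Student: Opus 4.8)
The plan is to read the statement in its non-trivial form: we must exhibit a set $C\subseteq E_{cut}$ that constrains (hits) \emph{every} path in $\mP_{cut}$ while leaving every sequence flow path in $\mP$ intact, since $C=\emptyset$ would vacuously avoid $\mP$. Write $E_\mP := \bigcup_{P_f\in\mP}\bigcup_{l=1}^{n} E(P_f(l))$ for the edge set carrying the sequence flow, where $E(\cdot)$ denotes the edges of a path. The whole lemma then reduces to one structural claim: \emph{no path $P\in\mP_{cut}$ is contained in $E_\mP$}. Granting this, for each $P\in\mP_{cut}$ I would select a single edge $e_P\in E(P)\setminus E_\mP$ and set $C=\{e_P : P\in\mP_{cut}\}$. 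Then $C\subseteq E_{cut}$, $C$ constrains every cut path, and $C\cap E_\mP=\emptyset$, so no path of $\mP$ is touched.

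To prove the structural claim I would label the edges of $E_\mP$ by their segment index. Because the paths in $\mP$ are edge-disjoint and each $P_f$ is a simple path visiting $v_1,\ldots,v_{n+1}$ in order, every edge of $E_\mP$ lies on a unique segment $P_f(l)$ and hence carries a well-defined label $l\in\{1,\ldots,n\}$; since a simple path cannot revisit a waypoint, the only $E_\mP$-edges leaving $v_i$ have label $i$ and the only ones entering $v_j$ have label $j-1$. Now fix $P\in\mP_{cut}$, recalling that it is an augmenting path in $G_{ij}=G\setminus(V_p\setminus\{v_i,v_j\})$, hence a simple path from $v_i$ to $v_j$ that avoids every intermediate waypoint; I treat the non-trivial case $j\geq i+2$, in which $P$ genuinely skips $v_{i+1},\ldots,v_{j-1}$ (a direct transition $v_i\to v_{i+1}$ coincides with a sequence flow segment and is not a cut to be enforced against $\mP$). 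Assume for contradiction $E(P)\subseteq E_\mP$. Its first edge then has label $i$ and its last edge label $j-1>i$, and along $P$ the label changes only at a vertex $w$ where $P$ either crosses a segment boundary (impossible, as that is a waypoint $P$ avoids) or switches between two paths of $\mP$; thus every label change is an inter-path switch at a non-waypoint vertex.

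Since the label increases strictly from $i$ to $j-1$, some switch vertex $w$ has an incoming edge $(u,w)$ of label $a$ and an outgoing edge $(w,v)$ of label $b\geq a+1$, both in $E_\mP$, with $w\neq v_{a+1}$. I would then invoke the defining property of a sequence flow directly: forming the recombination $(P_{f,k_1}(1),\ldots,P_{f,k_n}(n))$ that takes the path carrying $(u,w)$ for segment $a$ and the path carrying $(w,v)$ for segment $b$ places $(u,w)$ in segment $a$ and $(w,v)$ in segment $b$ with $b\geq a+1$ and $w\neq v_{a+1}$, which is exactly an $ab$-cycle. This contradicts the requirement that no recombination of $\mP$ contain an $ij$-cycle. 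Hence $E(P)\not\subseteq E_\mP$, completing the structural claim and therefore the lemma via the hitting-set construction of the first paragraph.

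The main obstacle I anticipate is making the labeling argument airtight: justifying that labels are well-defined (edge-disjointness plus simplicity, so no waypoint is revisited), verifying that each label change is a genuine inter-path switch at a non-waypoint vertex so the $ij$-cycle exception $w=v_{a+1}$ is never triggered, and confirming that $j\geq i+2$ (rather than $j=i+1$) is precisely what forces a strict net increase. Equally delicate is stating why the recombination in the final step is admissible: the sequence flow definition quantifies over all index tuples $(k_1,\ldots,k_n)$, and I only need the two edges $(u,w),(w,v)$ to coexist in segments $a$ and $b$ of one such tuple, which is all the $ij$-cycle definition demands. I expect maximality of $\mP$ to play no role in this particular existence argument, entering instead in the surrounding recursive-feasibility discussion.
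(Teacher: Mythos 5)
Your proof is correct, and its skeleton is the same as the paper's: reduce the lemma to the structural claim that no path $P\in\mP_{cut}$ has all of its edges inside the edge set $E_\mP$ carried by the sequence flow, and then build $C$ as a hitting set by choosing, for each cut path, one edge outside $E_\mP$. Where you genuinely diverge is in how that structural claim is justified. The paper dispatches it in one line --- if every edge of $P_{cut}$ lay on some path of $\mP$, then ``$P_{cut}\in\mP$ and would not need to be constrained'' --- which, read literally, is a non sequitur: a path assembled from edges of several different sequence flow paths need not itself belong to $\mP$, and need not be a legitimate $v_i\to v_{i+1}$ segment. Your segment-labeling argument supplies the missing content: edge-disjointness and simplicity make the label well-defined, a cut path from $v_i$ to $v_j$ with $j\geq i+2$ forces a strict label increase at some non-waypoint vertex $w$, and the resulting pair of edges $(u,w),(w,v)$ in segments $a<b$ realizes an $ij$-cycle in an admissible recombination, contradicting the sequence flow definition. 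So your proof buys a rigorous closure of the step the paper merely asserts, at the cost of leaning explicitly on the no-$ij$-cycle property of $\mP$ (which the paper's statement of the lemma does not even invoke); your closing observation that maximality of $\mP$ is not needed is also consistent with the paper's own argument.
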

\begin{proof}
% Remove this: is a simple path from \(v_i\) to \(v_j\) in \(G\) for \(i < j-1\) and \(j = 3,\ldots,n\)
See Appendix~\ref{sec:appendix_proofs}
\end{proof}
\begin{proposition}
Let \(G_{m} = (V, E_{m})\) denote the graph for which the \(m\)-th iteration of the ILP~(\ref{eq:ILP_Problem}) synthesizes new cuts \(C_m \subset E_m\). Then, Assumption~\ref{assump:exist_config} is satisfied on \(G_{m+1} = (V, E_m\backslash C_m)\).
\label{prop:Invariant}
\end{proposition}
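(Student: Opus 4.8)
The plan is to set the proposition up as the inductive step of an invariant argument --- Assumption~\ref{assump:exist_config} holds for the base graph \(G = G_1\) and is propagated through each iteration --- and to reduce that step to a single structural fact: the survival of one sequence flow path suffices to re-establish Assumption~\ref{assump:exist_config}.

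First I would record the reduction. If \(G_{m+1}\) contains a sequence flow path \(P_f = (P_1,\ldots,P_n)\), then by the definition of a sequence flow path its induced subgraph satisfies equation~\eqref{eq:assumption2}. Choosing \(C\) to be every edge of \(G_{m+1}\) not lying on \(P_f\), the graph \(G_{m+1}\backslash C\) coincides, on the vertices \(v_1,\ldots,v_{n+1}\), with that induced subgraph, so the strict distance ordering of equation~\eqref{eq:assumption2} holds and Assumption~\ref{assump:exist_config} is witnessed on \(G_{m+1}\). Hence it is enough to show that \(G_{m+1}\) retains at least one sequence flow path.

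Next I would use the inductive hypothesis that \(G_m\) satisfies Assumption~\ref{assump:exist_config} to produce a maximal set of sequence flow paths \(\mP\) in \(G_m\) with maximum sequence flow value \(\tilde{f}\ge 1\); this is precisely the set the \(m\)-th ILP protects via its \(P_{keep}\) data, whereas the edges it is permitted to delete all lie in \(E_{cut}\), the edges of the cut paths \(\mP_{cut}\). By Lemma~\ref{lemma:min_cut_edges} there is a constraint set \(C\subseteq E_{cut}\) severing every path in \(\mP_{cut}\) while constraining no path in \(\mP\), which certifies that the constraint set of the ILP~\eqref{eq:ILP_Problem} admits a solution of objective value \(\tilde{f}\). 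Since Algorithm~\ref{alg:StaticConstraints} accepts \(C_m\) only when \(\bbone^T f^* = \tilde{f}\), and since the flow variables \(f\) are tied through the fourth and fifth constraints to the concrete edge-disjoint paths of \(P_{keep}\), attaining this value forces all \(\tilde{f}\) of those paths to remain uncut in \(G_{m+1}\). In particular one sequence flow path survives, and by the reduction above Assumption~\ref{assump:exist_config} holds on \(G_{m+1}\).

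The hard part will be the middle step: I must make precise that preserving the scalar flow value \(\tilde{f}\) --- rather than merely the reachability of \(v_{n+1}\) --- genuinely preserves concrete edge-disjoint sequence flow paths, and that Lemma~\ref{lemma:min_cut_edges} applies to the particular \(\mP\) and \(\mP_{cut}\) emitted by the subroutines at iteration \(m\), i.e.\ that the edges to be cut and the sequence flow to be kept can always be chosen disjoint even after the constraints accumulated in earlier iterations. The remaining steps are bookkeeping against the constraint rows of the ILP.
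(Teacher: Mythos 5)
Your proposal is correct and follows essentially the same route as the paper's proof: induction on the iteration index, invoking Lemma~\ref{lemma:min_cut_edges} to show the ILP constraint set admits a solution that cuts every path in \(P_{cut}\) while leaving a sequence flow path intact, and observing that the surviving sequence flow path's induced subgraph witnesses equation~\eqref{eq:assumption2} on \(G_{m+1}\). Your explicit ``reduction'' framing and the acceptance condition \(\bbone^Tf^*=\tilde{f}\) are just slightly more spelled-out versions of the paper's argument, with the same reliance on the minimum-cut-edge bookkeeping that you correctly flag as the delicate step.
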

\begin{proof}
In the first iteration, from Assumption~\ref{assump:exist_config}, we know there exists at least one test graph \(G' = (V, E\backslash C)\) that satisfies equation~\eqref{eq:assumption2}. Assume that the \(m\)-th iteration graph \(G_m = (V,E_m)\) also satisfies Assumption~\ref{assump:exist_config}. We will show by induction that the graph resulting from the the \((m+1)\)-th iteration, \(G_{m+1} = (V, E_m\backslash C_m)\), also satisfies Assumption~\ref{assump:exist_config}. By construction, Algorithm~\ref{alg:StaticConstraints} chooses a combination of set of augmenting paths \((S_1, \ldots, S_n)\) such that there exists a non-empty set of sequence flow paths \(\mF = \{(P_1, \ldots, P_n)|\, P_i \in S_i\}\) such that the simple path from \(v_1\) to \(v_n\) characterized by \(\Gamma = (P_1, \ldots, P_n) \in \mF\) does not form an \(ij\)-cycle for some \(i<j\leq n\). This implies that on the subgraph comprising of the edges in \(\Gamma\), equation~\eqref{eq:assumption2} is satisfied. \par
If the maximum possible sequence flow in a minimally constrained test graph is \(f^*\), then we can find a combination \((S_1, \ldots, S_n)\) such that for each \(i = 1,\ldots,n\), there exists a set \(S_i^'=\{P_{1,i}, \ldots, P_{f^*,i}\} \subseteq S_i\), from which we can construct the set \(\mF' = \{(P_{k_1,1},\ldots, P_{k_n, n})|P_{k_i, i}\in S'_i, \, 0\leq k_i \leq f^*\}\subseteq \mF\). By construction of the input variables to the ILP~\eqref{eq:ILP_Problem}, the constraints of ILP~\eqref{eq:ILP_Problem} require that the sequence flow variable \(f\) has atleast one element that is 1. This is possible only if there exists a set of edges \(C_m\) that constrain \(A_{cut,m}\) such that there exists at least one sequence path \(P\in\mF\) that does not have any of its minimum-cut edges constrained, which is true as shown in Lemma~\eqref{lemma:min_cut_edges}. Therefore, the new graph \(G_{m+1} = (V,E\backslash C_m)\) satisfies Assumption~\eqref{assump:exist_config}. \qed
\end{proof}
% ^ proof of the theorem that Static Constraints are invariant
\begin{theorem}
Under Assumption~\eqref{assump:exist_config}, Algorithm~\ref{alg:StaticConstraints} is complete and returns a test graph \(G'\) from Definition~\ref{def:testGraph} that satisfies equation~\eqref{eq:assumption2}.
\label{thm:valid_test_spec}
\end{theorem}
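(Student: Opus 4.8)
The plan is to prove completeness in two stages — termination of the \texttt{while} loop and progress at each pass through it — and then to prove validity by showing the terminal graph satisfies equation~\eqref{eq:assumption2}, from which membership in Definition~\ref{def:testGraph} follows. I would organize the whole argument as an induction over the iteration index $m$, with Proposition~\ref{prop:Invariant} furnishing the inductive step: assuming Assumption~\ref{assump:exist_config} holds on $G_m$, it holds on $G_{m+1}=(V,E_m\backslash C_m)$, so the problem never becomes infeasible as cuts accumulate.

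For termination I would note that the loop body executes only while $P_{cut}\neq\emptyset$, and that in that case the cut constraint $A_{cut}x\geq\bbone$ forces at least one entry of the returned $x^*$ to equal $1$, so $C_{new}=\{e_i\mid x^*_i=1\}$ is non-empty. Since the update $G\leftarrow G\backslash C_{new}$ strictly shrinks the finite edge set, the loop can run only finitely often; this part is routine. Productivity of each iteration is where completeness really rests: at iteration $m$, Proposition~\ref{prop:Invariant} gives Assumption~\ref{assump:exist_config} on $G_m$, hence a test graph attaining the maximum sequence-flow value $\tilde{f}$ exists. I would then invoke Lemma~\ref{lemma:min_cut_edges} to obtain, inside the cut edges $E_{cut}$, a choice $C\subseteq E_{cut}$ that severs every path in $P_{cut}$ while sparing the minimum-cut edges of a maximal sequence-flow combination. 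For that combination all constraints of ILP~\eqref{eq:ILP_Problem} are simultaneously satisfiable with $\bbone^{T}f=\tilde{f}$; because the algorithm enumerates every combination in $\mathcal{A}$ and every $A_f$, the test $\bbone^{T}f^{*}=\tilde{f}$ succeeds for at least one of them, the \textbf{break} fires, and a non-empty $C_{new}$ is committed. Thus no iteration stalls, and combined with termination this establishes completeness.

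For validity I would exploit the termination condition $P_{cut}=\emptyset$. By the specification of \texttt{Find-Cut-Paths}, this means the final graph $G'$ contains no path from any $v_i$ to any $v_j$ with $j\geq i+2$ that avoids the remaining waypoint vertices. I would argue this forces the \emph{first} waypoint reached along any path from $v_i$ toward $v_{n+1}$ to be precisely $v_{i+1}$: were some path to reach $v_k$ with $k\geq i+2$ first, its prefix would be a waypoint-avoiding $v_i$–$v_k$ path, contradicting $P_{cut}=\emptyset$. Hence every $v_i$–$v_{n+1}$ path visits $v_{i+1},\ldots,v_n$ in order, which yields the strict chain of equation~\eqref{eq:assumption2} — since $d_{G'}(v_i,v_{n+1})=d_{G'}(v_i,v_{i+1})+d_{G'}(v_{i+1},v_{n+1})$ with $d_{G'}(v_i,v_{i+1})>0$ — and simultaneously enforces each until-clause $\neg p_{i+1}\,\mathsf{U}\,p_i$, so the biconditional of equation~\eqref{eq:test_spec} holds and $G'$ is a test graph in the sense of Definition~\ref{def:testGraph}.

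I expect the main obstacle to be this final step: translating the purely combinatorial termination condition $P_{cut}=\emptyset$ into the semantic guarantee of equation~\eqref{eq:assumption2} and the biconditional of Definition~\ref{def:testGraph}. The subtlety is that $P_{cut}$ only certifies the absence of the \emph{edge-disjoint} waypoint-avoiding paths returned by the max-flow routine, so I must argue that no \emph{simple} waypoint-avoiding $v_i$–$v_j$ path can survive in $G'$ (otherwise \texttt{Find-Cut-Paths} would have returned it), and then confirm that "visits the waypoints in order on every route to the goal" is exactly equivalent to satisfying the until-constraints together with the reachability objective $\lozenge p_{n+1}$.
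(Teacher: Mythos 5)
Your proposal is correct and follows essentially the same route as the paper: termination via strict removal of edges from a finite edge set, productivity of each iteration via Proposition~\ref{prop:Invariant} (feasibility of the ILP with $\bbone^T f = \tilde{f}$ preserved as an invariant), and validity of the terminal graph read off from the stopping condition. Your final step is actually more explicit than the paper's, which simply asserts the distance chain once the last $S_{i,j}$ is constrained, whereas you correctly note that $P_{cut}=\emptyset$ certifies the absence of \emph{all} waypoint-avoiding $v_i$--$v_j$ paths (zero max flow means no path at all, resolving the edge-disjointness worry you raise) and derive equation~\eqref{eq:assumption2} from the forced ordering of waypoints.
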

\begin{proof}
Consider iteration \(m\) of the outer while loop in Algorithm~\ref{alg:StaticConstraints}, and let the graph at the \(m\)-th iteration be \(G = (V, E_m\). Denote \(V_p = \{v_i|v_i\vdash p_i,\, \forall 1\leq i\leq n+1\}\). Let \((n_{i,j})_m\) denote the maximum flow value from \(v_{i}\) to \(v_{j}\) on \(G_{ij} = (V\backslash(V_p\backslash \{v_i, v_{j}\}), E_m)\), for some \(i,j\) such that \(1\leq i<j-1 \leq n\). This implies that there is a set \(S_{i,j}\) of \((n_{i,j})_m\) edge-disjoint paths that characterize the maximum flow from \(v_i\) to \(v_j\) on \(G_{ij}\). Let \((\mP_{i,j}(k))_m\) be the set of all simple paths from \(v_i\) to \(v_j\) that share an edge with the \(k\)-th path in \(S_{i,j}\). Let \((MC_{i,j})_m\) be the set of minimum-cut edges on the paths in \(S_{i,j}\) and let \((E_{i,j})_m \subset E_m\) be the set of all edges on some path from \(v_i\) to \(v_{j+1}\) on \(G_{ij}\). Clearly, \((MC_{i,j})_m \subseteq (E_{i,j})_m\). \par
For every \(m\geq 1\), we can claim that \(|(E_{i,j})_{m+1}| < |(E_{i,j})_m|\) because edges are removed to constrain \(S_{i,j}\) in the \(m\)-th iteration. Let \(\tilde{m}\) be the number of iterations for \(G_{ij}\) to become disjoint. In the worst-case, edges continue to be constrained until iteration \(\tilde{m}\) at which \((E_{i,j})_{\tilde{m}} = (MC_{i,j})_{\tilde{m}}\), at which point constraining edges to cut \((S_{i,j})_{\tilde{m}}\) results in a cut separating \(v_i\) and \(v_j\). Thus, \(\tilde{m}\) has to be finite for every such \(i,j\).\par
At the same time, from Proposition~\ref{prop:Invariant}, the synthesized cuts are such that Assumption~\ref{assump:exist_config} is maintained as an invariant. Therefore, when the last set of paths \(S_{i,j}\) are constrained, the final test graph \(G'\) is such that \(d_{G'}(v_1,v_{n+1}) > \ldots >d_{G'}(v_n,v_{n+1})\).
\end{proof}
In addition to Assumption~\ref{assump:exist_config}, if Assumption~\ref{assump:SAPs} holds, Algorithm~\ref{alg:StaticConstraints} can be modified by a parameter setting. The proof of Theorem~\ref{thm:valid_test_spec} still holds.
% ----------------------- second lemma ----------------------------------- %% 
\begin{lemma}
On the test graph \(G'\), any test run \(\sigma\) starting from state \(v_1\) will satisfy the specification~(\ref{eq:test_spec}).
\label{lemma:test_graph_test_spec}
\end{lemma}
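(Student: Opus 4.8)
The plan is to prove the bi-implication in~\eqref{eq:test_spec} for an arbitrary run $\sigma$ on $G'$ with $\sigma(0)=v_1$. The direction ``$\Longleftarrow$'' is immediate: if $\sigma$ satisfies the nested eventuality $\lozenge(p_1 \wedge \lozenge(\cdots \wedge \lozenge p_{n+1}))$ then $p_{n+1}$ is in particular eventually satisfied, so $\sigma \models \lozenge p_{n+1}$. For the converse, if $\sigma \not\models \lozenge p_{n+1}$ then $v_{n+1}$ is never visited, the nested eventuality on the right-hand side also fails, and both sides are false; so it remains only to treat runs that do reach $v_{n+1}$. Hence the whole content of the lemma is the claim: \emph{every run on $G'$ from $v_1$ that reaches $v_{n+1}$ visits the waypoints so that the first visit to $v_i$ strictly precedes the first visit to $v_{i+1}$, for each $i=1,\ldots,n$.} Throughout I use Assumption~\ref{assump:n_goal_nodes}, so that each $p_i$ holds at exactly the single vertex $v_i$ and ``$p_i$ holds'' is interchangeable with ``the run is at $v_i$.''

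The first step is to record the structural property of $G'$ that the algorithm enforces at termination. When Algorithm~\ref{alg:StaticConstraints} halts, $P_{cut}=\emptyset$, which means that for every pair $(v_i,v_j)$ with $j \ge i+2$ there is no path from $v_i$ to $v_j$ in the reduced graph $G_{ij}=G'\setminus(V_p\setminus\{v_i,v_j\})$; equivalently $f_{G'}(v_i,v_j)=0$ for all such ``skip'' pairs, which is precisely the zero-flow requirement of the ILP~\eqref{eq:ILP_Problem} and refines the distance condition~\eqref{eq:assumption2} supplied by Theorem~\ref{thm:valid_test_spec}. I then decompose the finite prefix of $\sigma$ up to its first visit of $v_{n+1}$ into its successive waypoint visits: let $0=s_0<s_1<\cdots<s_m$ be the times at which $\sigma$ is at a waypoint, and write $\sigma(s_k)=v_{a_k}$, so that $a_0=1$ and $a_m=n+1$. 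Since no waypoint other than $v_{a_k},v_{a_{k+1}}$ is visited between $s_k$ and $s_{k+1}$, that segment is a walk from $v_{a_k}$ to $v_{a_{k+1}}$ lying entirely in $G_{a_k a_{k+1}}$, and the structural property therefore forbids $a_{k+1}\ge a_k+2$; that is, $a_{k+1}\le a_k+1$ for every $k$.

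The last step converts this ``advance by at most one'' property into the ordering of first occurrences. Let $M_k=\max_{0\le l\le k} a_l$; then $M_0=1$, $M_{k+1}\le M_k+1$, and $M_m = n+1$, so the running maximum starts at $1$, increases in unit steps, and reaches $n+1$. Consequently, for each $c\in\{1,\ldots,n+1\}$ the first index $k$ with $M_k=c$ satisfies $a_k=c$ and is also the first index with $a_k=c$, and these indices are strictly increasing in $c$. In run time this reads $T_1<T_2<\cdots<T_{n+1}<\infty$, where $T_i$ is the index of the first visit to $v_i$. From $T_1<\cdots<T_{n+1}$ the nested eventuality holds along the chain of times $T_1,\ldots,T_{n+1}$, and each $\neg p_{i+1}\,\mathsf{U}\,p_i$ holds because $p_{i+1}$ is false everywhere before $T_{i+1}$ while $p_i$ holds at $T_i<T_{i+1}$; this is exactly the right-hand side of~\eqref{eq:test_spec}.

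The main obstacle is that the no-skip property only constrains \emph{forward} transitions between consecutive waypoint visits: a run is still free to move backward to earlier waypoints or to revisit waypoints, so one cannot argue naively that $\sigma$ marches monotonically through $v_1,v_2,\ldots,v_{n+1}$. The running-maximum argument is exactly what absorbs this, showing that despite such backward excursions the \emph{first} occurrences are forced into the correct order. A secondary point requiring care is that this ordering genuinely relies on the zero-flow/no-skip property of the synthesized $G'$, and not on the weaker distance inequality~\eqref{eq:assumption2} alone, since the latter does not by itself rule out a single waypoint-skipping path.
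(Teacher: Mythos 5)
Your proof is correct, but it takes a genuinely different route from the paper's. The paper's proof rests on two ingredients: (i) the claim that each \(v_i\) is an articulation point in the sense that every path in \(G'\) from an earlier waypoint \(v_{j<i}\) to \(v_{n+1}\) passes through \(v_i\) (asserted to hold ``by construction of the ILP constraints''), and (ii) a progress metric \(m_t = \min_{\tau \le t} d_{G'}(\sigma_\tau, v_{n+1})\), which is nonnegative, non-increasing, always decreasable, and reaches \(0\); the ordering of the waypoint visits is then read off somewhat implicitly from (i). You instead pin the structural property to the precise termination condition of Algorithm~\ref{alg:StaticConstraints} (\(P_{cut}=\emptyset\), hence zero flow from \(v_i\) to \(v_j\) in \(G_{ij}\) for every skip pair \(j\ge i+2\)), decompose the run into segments between consecutive waypoint visits to get the ``advance by at most one'' bound \(a_{k+1}\le a_k+1\), and then use the running maximum \(M_k\) to force \(T_1<\cdots<T_{n+1}\). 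What your version buys is rigor exactly where the paper is weakest: the running-maximum step explicitly absorbs backward excursions and revisits, which the paper's metric argument does not address, and your observation that the distance ordering~\eqref{eq:assumption2} alone (all that Theorem~\ref{thm:valid_test_spec} formally certifies) does not rule out a single skipping path is a correct and worthwhile clarification of what the lemma actually depends on. What the paper's metric buys in exchange is a direct argument that the run makes progress toward \(v_{n+1}\); in your write-up the reduction to runs that do reach \(v_{n+1}\) plays that role and is handled cleanly by the case split on \(\lozenge p_{n+1}\). Both proofs ultimately lean on the same enforced property of \(G'\) (no waypoint-skipping paths), so neither is more self-contained on that point, but yours states it as a checkable consequence of the algorithm's exit condition rather than as an appeal to the ILP construction.
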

\begin{proof}
See Appendix~\ref{sec:appendix_proofs}
\end{proof}
From Theorem~\ref{thm:valid_test_spec} and Lemma~\ref{lemma:test_graph_test_spec}, Algorithm~\ref{alg:StaticConstraints} synthesizes a test graph \(G'\) for the test specification ~\eqref{eq:test_spec}, solving Problem~\ref{Pr:Problem1}.
\begin{proposition}
Consider the test graph $G'$ from Definition~\ref{def:testGraph} for the test specification $\sigma$ from~\eqref{eq:test_spec}. If Assumption~\ref{assump:exist_config} holds, Algorithm~\ref{alg:StaticConstraints} returns a minimally constrained test graph.
\label{prop:min_constr}
\end{proposition}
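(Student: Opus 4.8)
The plan is to show that the sequence flow value on the returned graph $G'$ equals the largest sequence flow value achievable on \emph{any} subgraph of the original graph $G$, and that this largest value is an upper bound for the sequence flow value of every valid test graph. I would first record this benchmark. Since every test graph $G' = (V, E\backslash C)$ from Definition~\ref{def:testGraph} is obtained from $G$ by deleting edges, any sequence flow on $G'$ is also a sequence flow on $G$; hence the sequence flow value from $v_1$ to $v_{n+1}$ on any test graph is at most $\tilde f_1$, the maximum sequence flow value on $G$ computed by the Sequence-Flows subroutine (which by definition already respects the no-$ij$-cycle requirement). Establishing that $G'$ attains $\tilde f_1$ therefore suffices to conclude it is minimally constrained in the sense of the definition of a minimally constrained test graph.

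The core of the argument is an invariance claim: the quantity $\tilde f_m$ returned by Sequence-Flows at the start of outer-loop iteration $m$ is independent of $m$. I would prove this by induction. The acceptance test $\bbone^T f^* = \tilde f_m$ in Algorithm~\ref{alg:StaticConstraints} guarantees that the cut set $C_m$ accepted at iteration $m$ leaves intact a collection of $\tilde f_m$ edge-disjoint, $ij$-cycle-free sequence flow paths, so the maximum sequence flow on $G_{m+1} = (V, E_m\backslash C_m)$ is at least $\tilde f_m$. Conversely, $G_{m+1}$ is a subgraph of $G_m$, so its maximum sequence flow is at most $\tilde f_m$. Together these give $\tilde f_{m+1} = \tilde f_m$. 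The one thing this step presupposes is that such an accepting cut $C_m$ always exists; this is exactly what Proposition~\ref{prop:Invariant} (together with Lemma~\ref{lemma:min_cut_edges}) provides, since it shows Assumption~\ref{assump:exist_config} is preserved and that one can always cut the paths in $P_{cut}$ without destroying at least one sequence flow path of the targeted value.

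Finally, I would combine the invariance with termination. By Theorem~\ref{thm:valid_test_spec}, the outer while loop halts after finitely many iterations with $P_{cut} = \emptyset$ and returns a valid test graph $G'$ satisfying equation~\eqref{eq:assumption2}. Applying the invariance claim along this finite iteration sequence yields that the sequence flow value on $G'$ equals $\tilde f_1$. Since $\tilde f_1$ is an upper bound on the sequence flow value of every test graph by the subgraph argument of the first paragraph, $G'$ maximizes the sequence flow from $v_1$ to $v_{n+1}$ and is therefore minimally constrained.

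I expect the main obstacle to be the invariance step, specifically the direction $\tilde f_{m+1} \geq \tilde f_m$. The inequality $\tilde f_{m+1} \leq \tilde f_m$ is immediate from subgraph inclusion, but the reverse relies on reading the ILP acceptance condition $\bbone^T f^* = \tilde f_m$ correctly as the survival of a \emph{full} $\tilde f_m$-valued sequence flow, and on invoking recursive feasibility (Proposition~\ref{prop:Invariant}) to guarantee this maximum is attainable at \emph{every} iteration rather than only at the first. A secondary care point is confirming that the sequence flow value optimized by the ILP cost and tracked by $\tilde f$ is literally the quantity appearing in the definition of a minimally constrained test graph, so that per-iteration local preservation translates into global optimality over the whole space of test graphs.
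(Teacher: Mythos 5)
Your proposal is correct and follows essentially the same route as the paper's own proof: both rest on the per-iteration preservation of the maximum sequence flow value (via Lemma~\ref{lemma:min_cut_edges} and recursive feasibility) combined with finite termination. You are somewhat more careful than the paper in that you explicitly state the upper-bound argument (any test graph is a subgraph of $G$, so $\tilde f_1$ bounds its sequence flow value) and separate the two directions of the invariance $\tilde f_{m+1} = \tilde f_m$, both of which the paper's terse proof leaves implicit.
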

\begin{proof}
See Appendix~\ref{sec:appendix_proofs}.
\end{proof}
%%% ======================= EXAMPLES =============================== %%%
\section{Examples}
\label{sec: Examples}
We illustrate the iterative synthesis of restrictions on a simple graph and a small gridworld, and then show runtimes of Algorithm~\ref{alg:StaticConstraints} on random gridworld instances for both the case for which Assumption~\ref{assump:exist_config} is true, and the case for which Assumptions~\ref{assump:exist_config} and ~\ref{assump:SAPs} are true. \medskip\\
\textbf{Simple graph:}
Consider a simple non-deterministic Kripke structure representing an autonomous agent, shown in Figure~\ref{fig:simple_automaton}, with propositional formulas labeled adjoining the states. The agent mission objective is to reach \(g\) while being restricted to start from state \(q_0\). The test environment seeks to restrict transitions such that the agent is prompted to pass through waypoint \(w\) in its trajectory to \(g\).\par
Inputs to Algorithm~\ref{alg:StaticConstraints} include the labeled graph \(G\) induced by the Kripke structure, the agent specification \(\lozenge p_3\), the test specification \(\lozenge p_2\), and the initial condition constraint \(\lozenge p_1\). 
% Algorithm~\ref{alg:StaticConstraints} could return any one of the four minimal test graph configurations that correspond to the 4 possible edge-constraint sets \(\{(v_2,v_4), (v_2, v_5)\}\), \(\{(v_2,v_4), (v_5, v_6)\}\), \(\{(v_4,v_6), (v_2, v_5)\}\), \(\{(v_4,v_6), (v_5, v_6)\}\). Figure~\ref{fig:simple_automaton} illustrates the test graph returned by Algorithm~\ref{alg:StaticConstraints}, which corresponds to the edge-constraint set \(\{(v_4,v_6), (v_5, v_6)\}\). Starting from the initial induced graph, the algorithm searches for an augmented path from \(q_0\) to \(g\), not passing through \(w\), that needs to be cut. It finds the path: \(q_0\rightarrow v_2\rightarrow v_4\rightarrow v_6\rightarrow g\), of which two edges, \((q_0, v_2)\) and \((v_6, g)\), are minimum cut edges on \(Paths(q_0, w)\) and \(Paths(w, g)\), respectively. It chooses to remove edge \((v_4, v_6)\), adding it to the edge-constraint set, and searches again for an augmented path from \(q_0\) to \(g\) (not passing through \(w\)), finding the path: \(q_0\rightarrow v_2\rightarrow v_5\rightarrow v_6\rightarrow g\). Repeating the same procedure, it removes edge \((v_5, v_6)\), thus cutting all augmented paths from \(p_0\) to \(g\), except those passing through \(w\).
Algorithm~\ref{alg:StaticConstraints} constrains the edges \(\{(v_2,v_4), (v_4, v_6)\}\) in the first iteration, and the edges \(\{(v_2,v_5), (v_5, v_6)\}\) in the second iteration. Although in this simple example, searching the set of all augmented paths becomes searching over all paths, in larger examples discussed below, each augmented path represents a class of paths that share some edge(s) with it.\\\medskip
% Write property that each augmented path is edge-disjoint somewhere above. 
\begin{figure}[H]
    \centering
    \begin{tabular}{cc}
     \begin{tikzpicture}
          \Vertex[label=$q_0$]{A}
          \Text[x=0, y=0.3, position=above, distance=1mm]{$p_1$}
          \Vertex[x=1, y=0, label=$v_2$]{B}
          \Vertex[x=2, y=1, label=$w$, RGB, color={190,174,212}]{C}
          \Text[x=2.3, y=1.2, position=right, distance=1mm]{$p_2$}
          \Vertex[x=2, y=0, label=$v_4$]{D}
          \Vertex[x=2, y=-1, label=$v_5$]{E}
          \Vertex[x=3, y=0, label=$v_6$]{G}
          \Vertex[x=4, y=0, label=$g$]{H}
          \Text[x=4.3, y=0, position=right, distance=1mm]{$p_3$}
          % edges:
          \Edge[Direct=true](A)(B)
          \Edge[Direct=true](B)(C)
          \Edge[Direct=true](B)(D)
          \Edge[Direct=true](B)(E)
          \Edge[Direct=true](C)(G)
          \Edge[Direct=true](D)(G)
          \Edge[Direct=true](E)(G)
          \Edge[Direct=true](G)(H)
    \end{tikzpicture}
    &
    \begin{tikzpicture}
          \Vertex[label=$q_0$]{A}
          \Text[x=0, y=0.3, position=above, distance=1mm]{$p_1$}
          \Vertex[x=1, y=0, label=$v_2$]{B}
          \Vertex[x=2, y=1, label=$w$, RGB, color={190,174,212}]{C}
          \Text[x=2.3, y=1.2, position=right, distance=1mm]{$p_2$}
          \Vertex[x=2, y=0, label=$v_4$]{D}
          \Vertex[x=2, y=-1, label=$v_5$]{E}
          \Vertex[x=3, y=0, label=$v_6$]{G}
          \Vertex[x=4, y=0, label=$g$]{H}
          \Text[x=4.3, y=0, position=right, distance=1mm]{$p_3$}
          % edges:
          \Edge[Direct=true](A)(B)
          \Edge[Direct=true](B)(C)
          \Edge[Direct=true,style={dashed}, color=red](B)(D)
          \Edge[Direct=true,style={dashed}, color=red](B)(E)
          \Edge[Direct=true](C)(G)
          \Edge[Direct=true, style={dashed}, color=red](D)(G)
          \Edge[Direct=true, style={dashed}, color=red](E)(G)
          \Edge[Direct=true](G)(H)
    \end{tikzpicture}
    \end{tabular}
    \caption[]{\small{\emph{Left:} Simple Kripke structure representing states that the agent can occupy. The waypoint, \(w\), is highlighted in purple to indicate that transitions are restricted corresponding to propositional formula \(p_2 = L(w)\). \emph{Right:} A test graph. Dashed edges in red illustrate transitions that have been restricted/removed from the Kripke structure above.}}
    \label{fig:simple_automaton}
\end{figure}
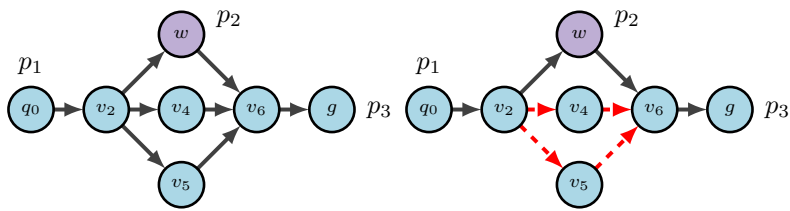
% ============================= Might have to remove this ========================================= %
%=========================================================================================%
% \begin{figure}
%     \centering
%     \begin{tikzpicture}
%           \Vertex[label=$q_0$]{A}
%           \Text[x=0, y=0.3, position=above, distance=1mm]{$p_1$}
%           \Vertex[x=1, y=0, label=$v_2$]{B}
%           \Vertex[x=3, y=0, label=$v_4$]{D}
%           \Vertex[x=3, y=-1, label=$v_5$]{E}
%           \Vertex[x=5, y=0, label=$v_6$]{G}
%           \Vertex[x=6, y=0, label=$g$]{H}
%           \Text[x=6.3, y=0, position=right, distance=1mm]{$p_3$}
%           % edges:
%           \Edge[Direct=true](A)(B)
%           \Edge[Direct=true](B)(D)
%           \Edge[Direct=true](B)(E)
%           \Edge[Direct=true](D)(G)
%           \Edge[Direct=true](E)(G)
%           \Edge[Direct=true](G)(H)
%     \end{tikzpicture}
% \end{figure}
\textbf{Simple Gridworld: }
In Figure~\ref{fig:illustration_add_obstacles}, we illustrate the iterative synthesis of obstacles in a gridworld instance. Note that this configuration can be synthesized only by considering all sets of augmenting paths between (\(p_1, p_2\)) and (\(p_2, p_3\)). Since there is no shortest augmenting path from \(p_2\) to \(p_3\) that does not form a cycle with some (in this example, there is only one) shortest augmenting path flow from \(p_1\) to \(p_2\), it is imperative to use all sets of augmenting paths in the Sequence-Flows subroutine.

\begin{figure*}[h]
        \centering
        \begin{subfigure}[b]{0.475\textwidth}
            \centering
            \includegraphics[width=\textwidth]{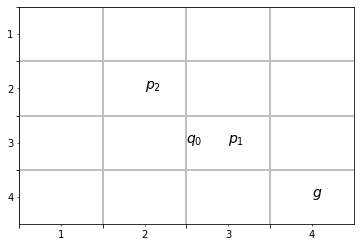}
            \caption[Network2]%
            {{\small Initial grid }}    
            \label{fig:(a)}
        \end{subfigure}
        \hfill
        \begin{subfigure}[b]{0.475\textwidth}  
            \centering 
            \includegraphics[width=\textwidth]{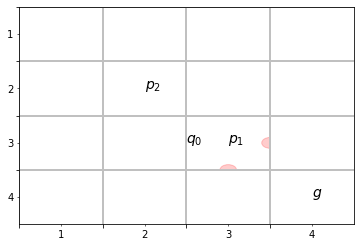}
            \caption[]%
            {{\small Iteration 1}}    
            \label{fig:(b)}
        \end{subfigure}
        \vskip\baselineskip
        \begin{subfigure}[b]{0.475\textwidth}   
            \centering 
            \includegraphics[width=\textwidth]{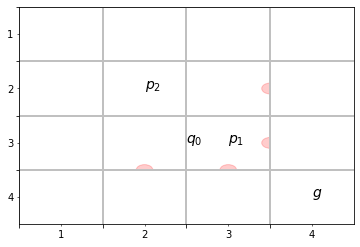}
            \caption[]%
            {{\small Iteration 2}}    
            \label{fig:(c)}
        \end{subfigure}
        \hfill
        \begin{subfigure}[b]{0.475\textwidth}   
            \centering 
            \includegraphics[width=\textwidth]{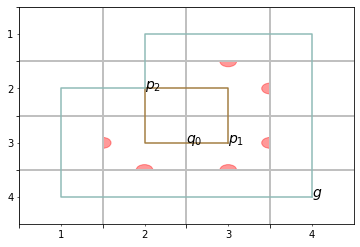}
            \caption[]%
            {{\small Iteration 3. The colored paths highlight a sequence flow from \(p_1\) to \(g\).}}    
            \label{fig:(d)}
        \end{subfigure}
        \caption[]
        {\small Synthesizing static test environment for \(\varphi_{test} = \lozenge(p_1 \wedge \lozenge p_2) \wedge \neg p_2 \mathsf{U}\, p_1\) and \(\varphi_{a} = \lozenge g\). } 
        \label{fig:mean and std of nets}
    \label{fig:illustration_add_obstacles}
\end{figure*}
\textbf{Random Gridworld Instances: }
For the case of setting all augmenting paths in the Sequence-Flows subroutine, we ran 50 random instances each for small gridworlds and propositions and plotted the average runtimes in Figure~\ref{fig:all_runtimes}. The number of propositions are limited by the size of the gridworld instances, which is restricted by the combinatorial nature of finding all sets of augmenting paths, and all combinations of sets of augmenting paths. 
\par
If we choose initial gridworld instances that satisfy Assumption~\ref{assump:SAPs}, then Algorithm~\ref{alg:StaticConstraints} can synthesize static constraints for slightly larger \(t\times t\) grid sizes. The average runtimes for 50 random iterations for various grid sizes \(t\) is plotted in Figure~\ref{fig:sap_runtimes}. The small increase to larger grid size is due to the Sequence-Flows subroutine reasoning over shortest augmenting paths, and not all augmenting paths.
\par
The average runtimes increase exponentially with the size of the grid. The number of propositions, denoted by \(|P|\), is labeled \(n\) if the test specification \(\varphi_{test}\)~\eqref{eq:test_spec} is comprised of propositions (\(p_1, \ldots, p_n\)). In both Figures~\ref{fig:all_runtimes} and ~\ref{fig:sap_runtimes}, the average runtime for fewer  propositions is at times higher that the average runtime for more propositions. This can be attributed to the Sequence-Flows subroutine taking longer to enumerate all simple paths (or all shortest paths in case of Assumption~\ref{assump:SAPs}) between two nodes, which could be greater in number due to fewer propositions constraining the graph.
\begin{figure}[!htbp]
\centering
    \begin{subfigure}[b]{0.475\textwidth}   
            \centering 
            \includegraphics[scale=0.3]{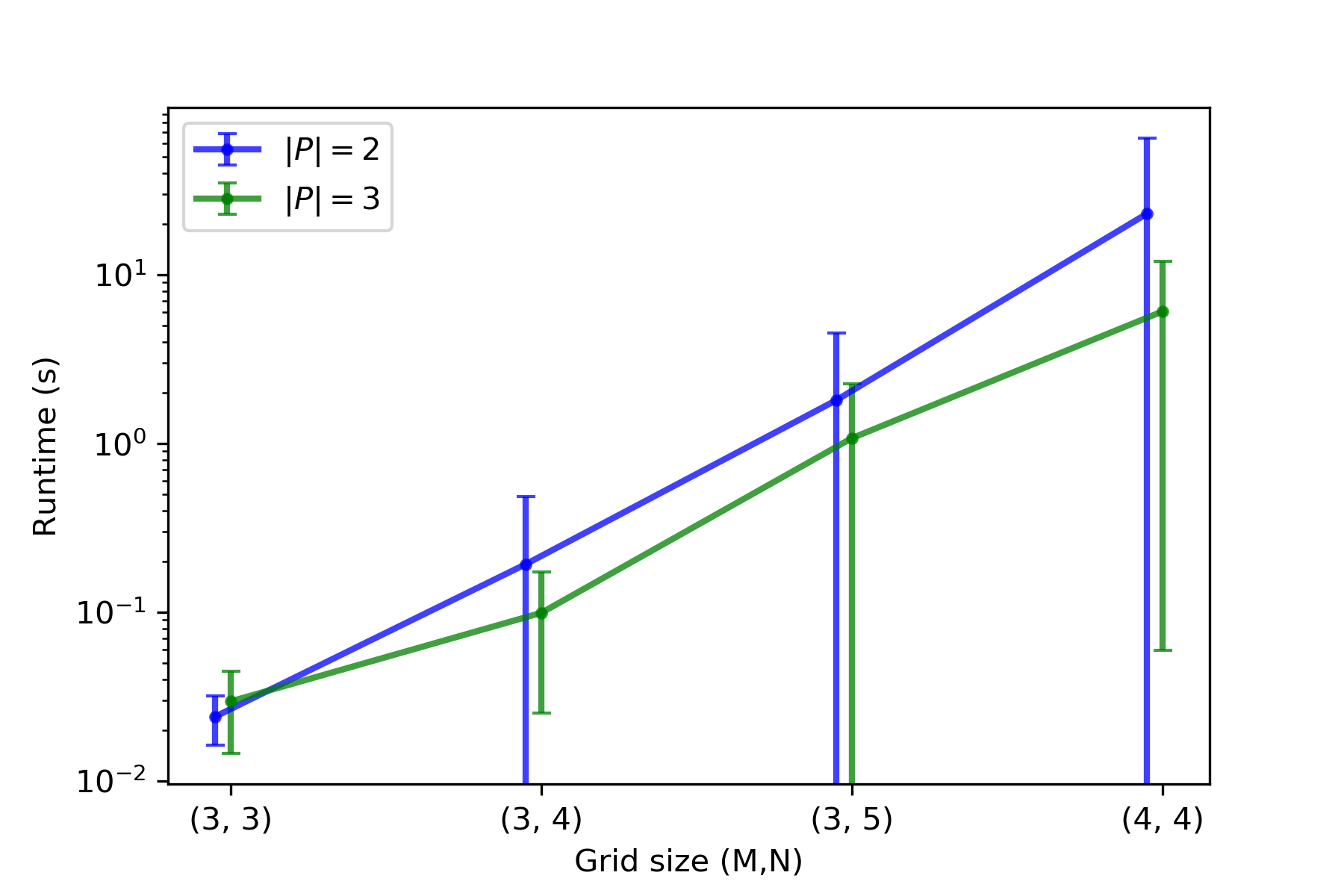}
            \caption[]%
            {{\small Small gridworld configurations using all augmenting flows.}}    
            \label{fig:all_runtimes}
        \end{subfigure}
        \hfill
        \begin{subfigure}[b]{0.475\textwidth}   
            \centering 
            \includegraphics[scale=0.4]{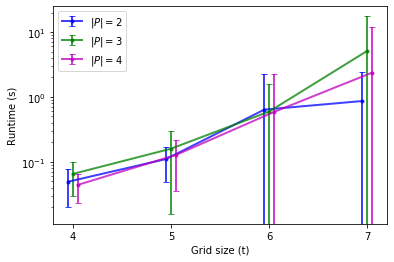}
            \caption[]%
            {{\small Gridworld configurations using only shortest augmenting flows.}}    
            \label{fig:sap_runtimes}
        \end{subfigure}
        \caption[]%
            {{\small Average runtime over 50 random instances. The number of propositions in \(\varphi_{test}\) is denoted by \(|P|\) in the legend. Error bars represent standard deviation of runtimes.}} 
\end{figure}

% Based on these results, we conjecture that synthesizing static constraints to form a test graph~\ref{def:testGraph} is NP-complete. 
Another paradigm for the problem of synthesizing static test environments for sequence behaviors could be multi-commodity network flows, however, that setting considers multiple source-sink flows simultaneously drawing from the capacity of each edge, and here we compute separate network flows for every source-sink pair of nodes. 
% On the implementation side, certain aspects could help improve the runtime such as generating all pairs of valid augmenting paths on-the-fly.
%%%%%%% ======================== Conclusion ======================= %%%%%%%%%%%%%%%%%%%%
\section{Conclusions and Future Work}
In this paper, we proposed an algorithm to synthesize a static test environment to observe sequence-like behavior in a discrete-transition system. First, we formulated this test environment synthesis problem as a problem of synthesizing cuts on graphs using concepts of flow networks. Then, we proposed an algorithm which synthesized the cuts iteratively using an integer linear program. We proved that this algorithm is complete, and that the edges constrained by the ILP at each iteration maintain feasibility of the constraint in the next iteration. Finally, we conducted numerical experiments on random gridworld instances to assess the runtime of our algorithm. Simulation results preclude this algorithm from being tractable to larger examples. As future work, we will investigate heuristic methods to solve this problem, and explore static and reactive test environment synthesis for other classes of test specifications.
\label{sec:Conclusions}
% We have proposed new algorithms for discrete test synthesis guided by coverage of behaviors, which are modeled as temporal logic formulae. We synthesized these test runs both by placing static constraints on the system and by synthesize a reactive environment policy. The main objective in determining the static constraints and reactive environment test policies is to synthesize a test environment in which the system must first cover all sets of propositions designated by the test environment before satisfying its specification. We believe that test configurations synthesized using discrete variables can provide a high-level test structure that can be further fine-tuned using information from low-level test runs. Furthermore, the test policies can be configured to varying degrees of difficulty by optimizing for the limit-average cost of the correctness over the test run. As future work, we seek to use techniques such as compositional reasoning and symbolic implementations, with binary decision diagrams (BDDs), to make our discrete test synthesis algorithms scalable. 

\subsection*{Acknowledgments}
The authors would like to acknowledge Professor Mani Chandy, Dr. Shih-Hao Tseng, and Dr. Ugo Rosolia for insightful discussions. We acknowledge funding from AFOSR Test and Evaluation Program, grant FA9550-19-1-0302. 

\bibliographystyle{plain}
\bibliography{bib}

\begin{thebibliography}{10}

\bibitem{DARPA_Urban}
{DARPA Urban Challenge}.
\newblock \url{https://www.darpa.mil/about-us/timeline/darpa-urban-challenge}.

\bibitem{DARPA_Urban_rules}
{Technical Evaluation Criteria}.
\newblock \url{https://archive.darpa.mil/grandchallenge/rules.html}.

\bibitem{alur2015principles}
Rajeev Alur.
\newblock {\em Principles of cyber-physical systems}.
\newblock MIT Press, 2015.

\bibitem{annpureddy2011s}
Yashwanth Annpureddy, Che Liu, Georgios Fainekos, and Sriram Sankaranarayanan.
\newblock S-taliro: A tool for temporal logic falsification for hybrid systems.
\newblock In {\em International Conference on Tools and Algorithms for the
  Construction and Analysis of Systems}, pages 254--257. Springer, 2011.

\bibitem{baier2008principles}
Christel Baier and Joost-Pieter Katoen.
\newblock {\em Principles of model checking}.
\newblock MIT press, 2008.

\bibitem{chou2018using}
Glen Chou, Yunus~Emre Sahin, Liren Yang, Kwesi~J Rutledge, Petter Nilsson, and
  Necmiye Ozay.
\newblock Using control synthesis to generate corner cases: A case study on
  autonomous driving.
\newblock {\em IEEE Transactions on Computer-Aided Design of Integrated
  Circuits and Systems}, 37(11):2906--2917, 2018.

\bibitem{cormen2009introduction}
Thomas~H Cormen, Charles~E Leiserson, Ronald~L Rivest, and Clifford Stein.
\newblock {\em Introduction to algorithms}.
\newblock MIT press, 2009.

\bibitem{donze2010breach}
Alexandre Donz{\'e}.
\newblock Breach, a toolbox for verification and parameter synthesis of hybrid
  systems.
\newblock In {\em International Conference on Computer Aided Verification},
  pages 167--170. Springer, 2010.

\bibitem{donze2010robust}
Alexandre Donz{\'e} and Oded Maler.
\newblock Robust satisfaction of temporal logic over real-valued signals.
\newblock In {\em International Conference on Formal Modeling and Analysis of
  Timed Systems}, pages 92--106. Springer, 2010.

\bibitem{emerson1990temporal}
E~Allen Emerson.
\newblock Temporal and modal logic.
\newblock In {\em Formal Models and Semantics}, pages 995--1072. Elsevier,
  1990.

\bibitem{fainekos2009robustness}
Georgios~E Fainekos and George~J Pappas.
\newblock Robustness of temporal logic specifications for continuous-time
  signals.
\newblock {\em Theoretical Computer Science}, 410(42):4262--4291, 2009.

\bibitem{fainekos2012verification}
Georgios~E Fainekos, Sriram Sankaranarayanan, Koichi Ueda, and Hakan Yazarel.
\newblock Verification of automotive control applications using s-taliro.
\newblock In {\em 2012 American Control Conference (ACC)}, pages 3567--3572.
  IEEE, 2012.

\bibitem{koopman2016challenges}
Philip Koopman and Michael Wagner.
\newblock Challenges in autonomous vehicle testing and validation.
\newblock {\em SAE International Journal of Transportation Safety},
  4(1):15--24, 2016.

\bibitem{kress2009temporal}
Hadas Kress-Gazit, Georgios~E Fainekos, and George~J Pappas.
\newblock Temporal-logic-based reactive mission and motion planning.
\newblock {\em IEEE transactions on robotics}, 25(6):1370--1381, 2009.

\bibitem{majumdar2019paracosm}
Rupak Majumdar, Aman Mathur, Marcus Pirron, Laura Stegner, and Damien Zufferey.
\newblock Paracosm: A language and tool for testing autonomous driving systems.
\newblock {\em arXiv preprint arXiv:1902.01084}, 2019.

\bibitem{manna2012temporal}
Zohar Manna and Amir Pnueli.
\newblock {\em The temporal logic of reactive and concurrent systems:
  Specification}.
\newblock Springer Science \& Business Media, 2012.

\bibitem{plaku2013falsification}
Erion Plaku, Lydia~E Kavraki, and Moshe~Y Vardi.
\newblock Falsification of ltl safety properties in hybrid systems.
\newblock {\em International Journal on Software Tools for Technology
  Transfer}, 15(4):305--320, 2013.

\bibitem{sankaranarayanan2012falsification}
Sriram Sankaranarayanan and Georgios Fainekos.
\newblock Falsification of temporal properties of hybrid systems using the
  cross-entropy method.
\newblock In {\em Proceedings of the 15th ACM international conference on
  Hybrid Systems: Computation and Control}, pages 125--134, 2012.

\bibitem{tuncali2018simulation}
Cumhur~Erkan Tuncali, Georgios Fainekos, Hisahiro Ito, and James Kapinski.
\newblock Simulation-based adversarial test generation for autonomous vehicles
  with machine learning components.
\newblock In {\em 2018 IEEE Intelligent Vehicles Symposium (IV)}, pages
  1555--1562. IEEE, 2018.

\end{thebibliography}
\section{Appendix}
\label{sec:appendix}
\subsection{Definitions}
\begin{definition}
 A \emph{Linear Temporal Logic (LTL)} specification is inductively defined over a set of atomic propositions \(AP\) by the following syntax:
\begin{equation*}
    \varphi ::= p \, |\, \neg \varphi\, |\, \varphi_1 \wedge \varphi_2 \, |\, \bigcirc \varphi\, |\, \varphi_1 \mathsf{U} \varphi_2\, , 
\end{equation*}
\end{definition}
where \(p \in AP\) is an atomic proposition, \(\wedge\) (``and") and \(\neg\) (``not") are logical operators, and \(\bigcirc\) (``next") and \(\mathsf{U}\) (``until") are temporal operators. Other logical connectives such as \(\vee\) (``or") and \(\Rightarrow\) can be defined: 1) \(\varphi_1 \vee \varphi_2\)\(:= \neg(\neg\varphi_1 \wedge \neg \varphi_2)\), and 2) \(\varphi_1 \Rightarrow \varphi_2\) \(:= \neg \varphi_1 \vee \varphi_2\). Other temporal operators such as \(\lozenge\) (``eventually") and \(\square\) (``always") can also be defined: 1) \(\lozenge \varphi\) \(:= \text{true} \mathsf{U}\, \varphi\), and 2) \(\square \varphi\) \(:= \neg \lozenge \neg \varphi\). A \emph{propositional formula} \(p\) comprises only of logical connectives over atomic propositions. If a propositional formula, \(p\), is \emph{true} at state \(q\in Q\), we denote it as \(q \vdash p\).

\textit{Semantics: } An LTL formula \(\varphi\) is evaluated over a run \(\sigma = q_0 q_1 \ldots\) of the system. We denote \(\varphi\) holds at position \(j \geq 0\) of the run \(\sigma = q_0 q_1 \ldots\) by \(q_j \models \varphi\). For a given set of atomic propositions \(AP\) and a run \(\sigma\), the satisfaction of an LTL formula \(\varphi\) is defined inductively: 1) for an atomic proposition \(p\), \(q_j \models p\) if \(q_j \vdash p\), 2) \(q_j \models \neg \varphi\) if it is not true that \(q_j \models \varphi\), 3) \(q_j \models \varphi_1 \wedge \varphi_2\) if both \(q_j \models \varphi_1\) and \(q_j \models \varphi_2\), 4) \(q_j \models \bigcirc \varphi\) if \(q_{j+1} \models \varphi\), and 5) \(q_j \models \varphi_1 \mathsf{U} \varphi_2\) if \(\exists k\geq j\) such that \(q_k \models \varphi_2\) and \(\forall i, j\leq i < k\), \(q_i \models \varphi_1\). Similarly, \(q_j \models \lozenge \varphi\) if \(\exists k\geq j\) such that \(q_k \models \varphi\), and \(q_j \models \square \varphi\) if \(\forall k\geq j\), \(q_k \models \varphi\). A run \(\sigma\) satisfies \(\varphi\), denoted by \(\sigma \models \varphi\) if \(q_0 \models \varphi\).
\begin{definition}[Minimum cut]
 Given a flow network \(G = (V,E)\), a source node \(s\in V\) and a sink node \(t\in V\), a cut \((S,T)\) of \(G\) is such that \(s\in S\) and \(t\in T\), and \(S\cap T = \emptyset\) and \(S\cup T = V\). Cut edges are edges \((u,v)\in E\) such that \(u\in S\) and \(v\in T\), and denote the sum of capacities of the cut edges as the cut capacity: \(c(S,T) = \Sigma_{u\in S}\Sigma_{v\in T}c(u,v).\)
\end{definition}
\begin{definition}[Residual Network~\cite{cormen2009introduction}]
 Given a flow network \(G = (V,E)\) and a flow function \(f\) and capacity \(c\), a residual network \(G_f\) is one that has residual capacity function \(c_f\) defined as follows,
\begin{equation}
c_f(u,v) =
\begin{cases}
c(u,v) - f(u,v) & \text{if } (u,v) \in E,\\
f(u,v) & if (v,u) \in E, \\
0 & \text{otherwise.}
\end{cases}
\end{equation}
\end{definition}
\subsection{Descriptions of Subroutines of Algorithm~\ref{alg:StaticConstraints}}
\label{sec:appendix_subroutine}
The Min-Cut-Edges subroutine takes as input a graph \(G\), a list of propositions \(p\), and a non-empty set of augmenting paths for every source-sink pair (\(v_i, v_{i+1}\)) such that \(1\leq i\leq n+1\), and returns as output the set of minimum-cut edges on those augmenting paths. This information is then used in constructing the problem data for the ILP. The Sequence-Flows subroutine takes as input a graph \(G\), a list of propositions \(p\), and a parameter to indicate if Assumption~\ref{assump:SAPs} holds, computes the combination of all augmenting flows (or all shortest augmenting flows) that can result in a non-zero sequence flow from \(v_1\) to \(v_{n+1}\). It returns as output the set of all sets of matrices that capture sequence-flow paths, \(\mA\), a set of \(\mP_{keep} = \{(S_{m_i,1},\ldots,S_{m_n,n})|S_{m_i, i} \in F_i, \, 0\leq m_i \leq k_i\}\), the total number of combinations, \(|\mA|\), and the maximum possible sequence flow value, \(\tilde{f}\), which is determined when \(\mA\) is constructed.
\subsection{Constraints of the Integer Linear Program~\eqref{eq:ILP_Problem}}
\label{sec:appendix_constraints}
The first constraint of the ILP, \(A_{cut}x\geq \bbone\), enforces the requirement that each path in \(P\in  P_{cut}\) is constrained. Each row of \(A_{cut}\) corresponds to a path \(P \in P_{cut}\). The \(q\)-th row of \(A_{cut}\) is constructed as follows,
    \begin{equation}
        (A_{cut})_{q,r} = \begin{cases}
      1 & \text{if}\ E_{cut}(r) \in P = P_{cut}(q) \\
      0 & \text{otherwise.}
    \end{cases}
        \label{eq:def_cut_small_pis}
    \end{equation}
In the second and third constraints, \(A_{keep}x \leq D_{keep}b\) and \(b \leq A_{keep}x\), is used to determine the variable \(b\) from the variable \(x\). Each row of \(A_{keep} \in \mathbb{B}^{m\times n}\) corresponds to some path \(P \in S_{m_i,i}\), and \(D_{keep}\in \mathbb{B}^{m\times m}\) is a diagonal matrix. Suppose the \(q\)-th row of \(A_{keep}\) corresponds to a path \(P \in S_{m_i,i}\) for \(P_{keep} = (S_{m_1,1},\ldots, S_{m_n,n})\), and \(MC_{keep}(i)\) is the set of minimum-cut edges on some path in \(S_{m_i,i}\), then the \(q\)-th row is constructed as follows,
    \begin{equation}
        (A_{keep})_{q,r} := \begin{cases}
      1, & \text{if}\ E_{cut}(r) \in P \cap MC_{keep}(i).\\
      0, & \text{otherwise.}
    \end{cases}
        \label{eq:def_cut_consecutive_pis}
    \end{equation}
The \(q\)-th diagonal entry of \(D_{keep}\) stores the total number of minimum-cut edges in the path corresponding to the \(q\)-th row of \(A_{keep}\).
    \begin{equation}
        D_{keep} := diag(A_{keep}\bbone)
        \label{eq:max}
    \end{equation}
These two constraints ensure that for some \(q \leq n\), \(b_q = 1\) iff at least one minimum-cut edge on the path corresponding to the \(q\)-th row of \(A_{keep}\) is constrained, and \(b_q = 0\) iff none of the minimum-cut edges on the path corresponding to the \(q\)-th row of \(A_{keep}\) are constrained.
\\
The fourth and fifth constraints, \(D_ff \leq A_f(\bbone - b)\) and \(f \geq A_f(\bbone -b) - D_f\bbone + \bbone\), determine the flow value for a given set of sequence flow paths, \(S_f\). Suppose the \(q\)-th row of the matrix \(A_f \in \mathbb{B}^{l\times m}\) corresponds to some sequence flow path \(P = (P_1, \ldots, P_n)\in S_f\). Let \(R = (r_1, \ldots, r_n)\) denote the indices of the paths \(P_1,\ldots, P_n\) according to the ordering of the paths constituting all \(S_{m_i,i}\) that is consistent with the construction of \(A_{keep}\) and \(D_{keep}\). Then, the \(q\)-th row of \(A_f\) is defined as follows,
\begin{equation}
        (A_f)_{q,r} := \begin{cases}
      1, & \text{if}\ r = r_i \text{ for some } 1\leq i \leq n.\\
      0, & \text{otherwise.}
    \end{cases}
        \label{eq:Af}
    \end{equation}
The \(q\)-th diagonal entry of matrix \(D_f \in \mathbb{B}^{l\times l}\) stores the total number of ones in the \(q\)-th row of \(A_f\), which is always \(n\).
    \begin{equation}
        D_f := diag(A_f\bbone).
        \label{eq:Df}
    \end{equation}
\\
The first constraint of the ILP, \(A_{cut}x\geq \bbone\), enforces the requirement that each path in \(P\in  P_{cut}\) is constrained. Each row of \(A_{cut}\) corresponds to a path \(P \in P_{cut}\). The \(q\)-th row of \(A_{cut}\) is constructed as follows,
    \begin{equation}
        (A_{cut})_{q,r} = \begin{cases}
      1 & \text{if}\ E_{cut}(r) \in P = P_{cut}(q) \\
      0 & \text{otherwise.}
    \end{cases}
        \label{eq:def_cut_small_pis}
    \end{equation}
In the second and third constraints, \(A_{keep}x \leq D_{keep}b\) and \(b \leq A_{keep}x\), is used to determine the variable \(b\) from the variable \(x\). Each row of \(A_{keep} \in \mathbb{B}^{m\times n}\) corresponds to some path \(P \in S_{m_i,i}\), and \(D_{keep}\in \mathbb{B}^{m\times m}\) is a diagonal matrix. Suppose the \(q\)-th row of \(A_{keep}\) corresponds to a path \(P \in S_{m_i,i}\) for \(P_{keep} = (S_{m_1,1},\ldots, S_{m_n,n})\), and \(MC_{keep}(i)\) is the set of minimum-cut edges on some path in \(S_{m_i,i}\), then the \(q\)-th row is constructed as follows,
    \begin{equation}
        (A_{keep})_{q,r} := \begin{cases}
      1, & \text{if}\ E_{cut}(r) \in P \cap MC_{keep}(i).\\
      0, & \text{otherwise.}
    \end{cases}
        \label{eq:def_cut_consecutive_pis}
    \end{equation}
The \(q\)-th diagonal entry of \(D_{keep}\) stores the total number of minimum-cut edges in the path corresponding to the \(q\)-th row of \(A_{keep}\).
    \begin{equation}
        D_{keep} := diag(A_{keep}\bbone)
        \label{eq:max}
    \end{equation}
These two constraints ensure that for some \(q \leq n\), \(b_q = 1\) iff at least one minimum-cut edge on the path corresponding to the \(q\)-th row of \(A_{keep}\) is constrained, and \(b_q = 0\) iff none of the minimum-cut edges on the path corresponding to the \(q\)-th row of \(A_{keep}\) are constrained.
\\
The fourth and fifth constraints, \(D_ff \leq A_f(\bbone - b)\) and \(f \geq A_f(\bbone -b) - D_f\bbone + \bbone\), determine the flow value for a given set of sequence flow paths, \(S_f\). Suppose the \(q\)-th row of the matrix \(A_f \in \mathbb{B}^{l\times m}\) corresponds to some sequence flow path \(P = (P_1, \ldots, P_n)\in S_f\). Let \(R = (r_1, \ldots, r_n)\) denote the indices of the paths \(P_1,\ldots, P_n\) according to the ordering of the paths constituting all \(S_{m_i,i}\) that is consistent with the construction of \(A_{keep}\) and \(D_{keep}\). Then, the \(q\)-th row of \(A_f\) is defined as follows,
\begin{equation}
        (A_f)_{q,r} := \begin{cases}
      1, & \text{if}\ r = r_i \text{ for some } 1\leq i \leq n.\\
      0, & \text{otherwise.}
    \end{cases}
        \label{eq:Af}
    \end{equation}
The \(q\)-th diagonal entry of matrix \(D_f \in \mathbb{B}^{l\times l}\) stores the total number of ones in the \(q\)-th row of \(A_f\), which is always \(n\).
    \begin{equation}
        D_f := diag(A_f\bbone).
        \label{eq:Df}
    \end{equation}
\\
The fourth constraint ensures that if any of the constituent paths, \(P_1, \ldots, P_n\), in the \(q\)-th sequence flow path \(P = (P_1, \ldots, P_n)\in S_f\) (for \(1\leq q\leq l\)), is constrained, then the flow value, \(f_q = 0\). The last constraint ensures that if none of the constituent paths, \(P_1, \ldots, P_n\), in the \(q\)-th sequence flow path \(P = (P_1, \ldots, P_n)\in S_f\) (for \(1\leq q\leq l\)), are constrained, then the flow value, \(f_q = 1\).

\subsection{Proofs}
\label{sec:appendix_proofs}
\textbf{Proof of Lemma~\ref{lemma:min_cut_edges}}
\begin{proof}
A path \(P_{cut} \in \mP_{cut}\) can be constrained by removing at least one of its constituent edges. The number of edges of \(P_{cut}\) that are not in some path \(P \in \mP\) is non-zero, since otherwise it would imply that \(P_{cut,i}\in \mP\), and would not need to be constrained. The set \(C\) can simply be chosen by selecting one or more edges on every \(P_{cut} \in \mP_{cut}\) that are not a part of some path in \(\mP\). \qed
\end{proof}
\textbf{Proof of Lemma~\ref{lemma:test_graph_test_spec}}
\begin{proof}
From Assumption~\ref{assump:n_goal_nodes}, there is only one node in \(G'\) for each proposition in characterizing the test specification ~(\ref{eq:test_spec}), and node satisfying proposition \(p_i\) is labeled as \(v_i\). For every \(i\in\{1,\cdots, n\}\), \(v_i\) is the only state in test graph \(G'\) that is successor to all states \(v\) on paths \(Paths(v_{j<i}, v_{n+1})\) for which \(d_{G'}(v, v_{n+1}) = d_{G'}(v_i, v_{n+1})+1\). This is true by construction of the ILP constraints; all \(Paths(p_{j<i},g)\) on the test graph \(G'\) pass through \(v_i\). \\
Let \(\sigma\) denote the test run of the agent starting at \(v_1\). We define a metric on the test graph \(G'\): \(m_t := \min_t d_{G'}(\sigma_t, v_{n+1})\) to be the closest distance to node \(v_{n+1}\) in the first \(t\) steps of the test run. Note three properties of this metric \(m_t\): (a) \(m_t \geq 0\), (b) \(m_t\) decreases: \(m_{t+1} := \min \{\sigma_{t+1}, m_t\} \leq m_t\), and (c) there exists a successor \(q_{t+1}\) to \(\sigma_t=q_t\) on \(G'\) such that \(d_{G'}(q_{t+1}, v_{n+1}) = d_{G'}(q_{t}, v_{n+1}) - 1\) that decreases \(m_t\). The metric \(m_t\) starts at \(m_0 \geq d_{G'}(v_1,v_{n+1})\) and decreases to \(0\) at the end of the test run. Thus, we can observe that \(\sigma \models \lozenge(p_1 \wedge \lozenge(p_2 \cdots \wedge \lozenge p_{n+1})) \wedge_{i=1}^{n} (\neg p_{i+1} \mathsf{U}\, p_i) \iff \sigma \models \lozenge v_{n+1}\) . \qed
\end{proof}
\textbf{Proof of proposition~\ref{prop:min_constr}}
\begin{proof}
By construction, the inputs to the ILP~\eqref{eq:ILP_Problem} are constructed based on a maximal set of sequence flow paths from \(v_1\) to \(v_n\). By Lemma~\ref{lemma:min_cut_edges}, at each iteration of the ILP~\eqref{eq:ILP_Problem} from which constraint edges are chosen, the maximum sequence flow value does not decrease at each iteration. Since there are a finite number of edges
there are a finite number of iterations until test graph is found. Therefore, the Algorithm~\ref{alg:StaticConstraints} returns a minimally constrained test graph. \qed
\end{proof}
\subsection{Complexity of Subroutines in Algorithm~\ref{alg:StaticConstraints}}
\label{sec:appendix_complexity}
Since Find-Cut-Paths is determining a set of augmenting paths for a single source-sink flow, it has a complexity of Edmonds-Karp algorithm, \(O(|V||E|^2)\) time for graph \(G=(V,E)\)~\cite{cormen2009introduction}. The complexity of Min-cut-Edges is \(O(|V||E|^3)\) time since it runs a max-flow algorithm for each edge in the worst-case. The main computational bottleneck is in the Sequence-Flows subroutine, which constructs sets of augmenting flows by computing combinations of all simple paths and all shortest paths. In the worst-case, enumerating all simple paths between two nodes is \(O|V!|\), and enumerating all shortest paths is slightly better in several cases. 
\end{document}